\newcommand{\norm}[1]{\lVert #1\rVert}
\newcommand{\Tr}[1]{\mathrm{Tr}\left[#1\right]}
\newcommand{\bra}[1]{\langle #1|}
\newcommand{\expB}[1]{\exp\left(#1 \right)}
\newcommand{\ket}[1]{|#1\rangle}
\newcommand{\logs}{\log}
\def\cD{\mathcal D}
\def\cE{\mathcal{E}}
\def\HE{\mathcal{F}}
\def\X{X}
\def\Y{B}
\DeclareMathOperator*{\argmax}{arg\,max}
\newtheorem{theorem}{Theorem}[section]
\newtheorem{prop}[theorem]{Proposition}
\newtheorem{lemma}[theorem]{Lemma}
\newenvironment{proof}[1][Proof]{\noindent\textbf{#1.} }{\ \rule{0.5em}{0.5em}}
\newtheorem{definition}[theorem]{Definition}
\newtheorem{rmk}[theorem]{Remark}
\begin{document}

\title{Computing Quantum Channel Capacities}

\author{Navneeth Ramakrishnan}
\altaffiliation{These authors contributed equally.}
\affiliation{Department of Computing, Imperial College London, United Kingdom}

\author{Raban Iten}
\altaffiliation{These authors contributed equally.}
\affiliation{Institute for Theoretical Physics, ETH Z\"urich, Switzerland}

\author{Volkher B.~Scholz}
\affiliation{Department of Physics and Astronomy, Ghent University, Belgium}

\author{Mario Berta}
\affiliation{Department of Computing, Imperial College London, United Kingdom}


\begin{abstract}
The capacity of noisy quantum channels characterizes the highest rate at which information can be reliably transmitted and it is therefore of practical as well as fundamental importance. Capacities of classical channels are computed using alternating optimization schemes, called Blahut-Arimoto algorithms. In this work, we generalize classical Blahut-Arimoto algorithms to the quantum setting. In particular, we give efficient iterative schemes to compute the capacity of channels with classical input and quantum output, the quantum capacity of less noisy channels, the thermodynamic capacity of quantum channels, as well as the entanglement-assisted capacity of quantum channels. We give rigorous \textit{a priori} and \textit{a posteriori} bounds on the estimation error by employing quantum entropy inequalities and demonstrate fast convergence of our algorithms in numerical experiments.
\end{abstract}


\maketitle

\begin{figure}[t]
    \center
    \includegraphics[width=1.\textwidth]{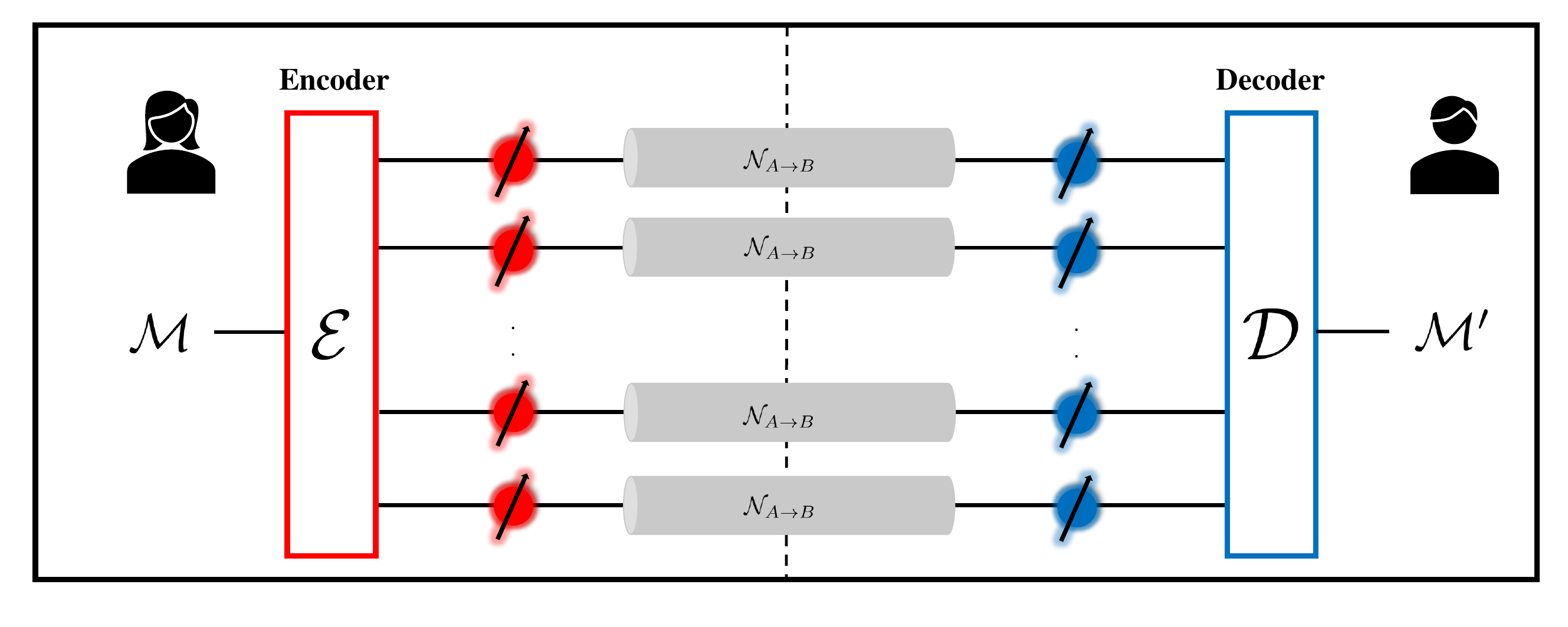}
      \caption{\textbf{Quantum communication:} Alice would like to send a quantum message to Bob over a noisy quantum channel $\mathcal{N}_{A\rightarrow B}$. Alice and Bob first agree on an encoder $\cE$ and decoder $\mathcal{D}$. The message $\mathcal{M}$ is then encoded into quantum states which are subsequently transmitted over several identical copies of the channel $\mathcal{N}_{A\rightarrow B}$. Bob receives the outputs of the channel and gets an estimate of the message $\mathcal{M}'$ employing the decoder. Using the channel multiple times along with a suitable encoding and decoding scheme, allows for reliable communication despite the noisy transmission. A natural question to ask in this setting is: what is the maximum rate at which Alice can reliably communicate with Bob? 
}
    \label{fig:overview}
\end{figure}

\section{Introduction}

A fundamental problem in information theory is to compute the maximum rate at which information can be reliably transmitted over a noisy channel. For channels described by classical physics, Shannon's seminal work~\cite{shannon1948mathematical} shows that every channel can be characterized by a single number\,---\,the capacity of the channel. More precisely, the channel capacity corresponds to the highest rate at which messages can be transmitted while ensuring that the error in the communication vanishes in the limit of long messages. In general, there is no closed analytical expression for the capacity of a channel but instead it is given in terms of an optimization problem. As such, efficient numerical methods to solve this optimization problem are sought after.

For channels governed by quantum physics, there exist multiple distinct settings where information must be transmitted over a noisy channel. Analogously to the classical regime, a quantum channel is given by any mapping that sends physical states to physical states. A graphical depiction is shown in Figure~\ref{fig:overview} and one may, for example, choose to transmit classical or quantum information, use pre-shared entanglement between the sender and the recipient to enhance the transmission rate, or look at minimizing the entropy gain between the output and input states of the channel. Hence, there exist many different relevant quantum channel capacities and\,---\,as with classical channels\,---\,these are typically given in terms of optimization problems which have to be solved numerically. 

\begin{figure}[t]
    \centering
    \includegraphics[width=.75\textwidth]{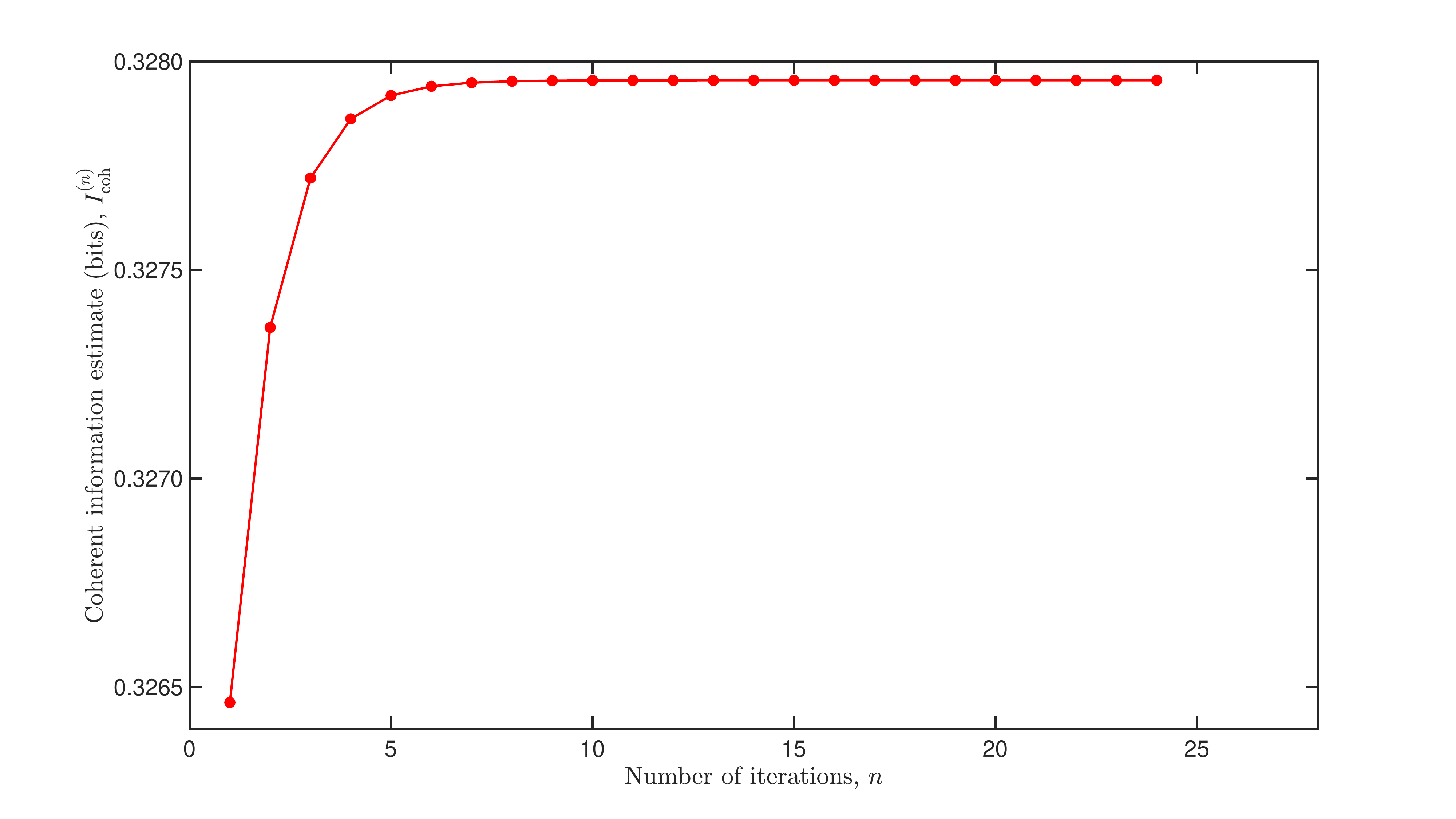}
    \caption{Convergence of the Blahut-Arimoto algorithm to the coherent information of the amplitude damping channel $\mathcal{E}^{AD}_{0.3}$ given in~\eqref{eq:amplitude_damping_channel_intro} with additive error in the estimate bounded by $\epsilon = 10^{-6}$ bits. The figure shows the lower bound on the coherent information in each iteration step $n$. The \textit{a priori} error criterion predicts $\mathcal{O}(10^6)$ iterations for the worst case but in practice, the algorithm is much faster requiring only $24$ iterations while ensuring that the additive error is bounded by $\epsilon$.}
    \label{fig:CI_amplitude_damping_intro}
\end{figure}

For concreteness, we focus in this introduction on the \textit{quantum capacity} of a channel which is related to the so-called \textit{coherent information} \cite{lloyd1997capacity,shor2002quantum,devetak2005private}. For a quantum channel $\cE$ the coherent information is defined as
\begin{align} \label{eq:defintion_ci_intro}
I_{\rm{coh}}(\cE) = \max_{\rho}\underbrace{S(\cE(\rho)) - S(\cE_c(\rho))}_{=I_{\rm{coh}}(\rho,\cE)}\,,
\end{align}
where the maximization is taken over input quantum states, the von Neumann entropy is denoted by $S(\rho)=- \Tr{\rho \log \rho}$, and $\cE_c$ is the complementary channel of $\cE$ (defined in Section~\ref{sec:notation}). As information is preserved in quantum physics, the complementary channel captures the notion that any information that is not transmitted to the recipient must have leaked to the environment. A channel is called \textit{less noisy} when the information that leaks to the environment is less than the information transmitted (for a precise definition, see \eqref{eq:less_noisy_condition}). In this case, the considered problem becomes convex \cite{watanabe2012private} and as in classical information theory, one could in principle use standard methods from convex optimization \cite{boyd2004convex} to solve the resulting problem. However, the gold standard to compute channel capacities in classical information theory are so-called \textit{Blahut-Arimoto} algorithms \cite{Blahut72,Arimoto72}. They have the following particularly strong properties:
 \begin{itemize}
 \item The algorithm is of an iterative form where each iteration step is of a very simple form given by a matrix exponential. Hence, no matrix derivatives have to be calculated in the update rules which is in contrast to, e.g., gradient based methods.
 \item Explicit \textit{a priori} and \textit{a posteriori} convergence guarantees are directly derived from entropic inequalities.
 \item The algorithm typically converges extremely fast in practice. Additionally, in certain cases, there is even exponentially fast convergence.
 \end{itemize}
In our work, we generalize Blahut-Arimoto algorithms to the quantum setting and show that aforementioned desirable properties are retained. For example, we show that the coherent information of less noisy channels can be approximated by iteratively improving on an estimate of the input state $\rho$ to the channel. The update rule for each iteration is given as
\begin{align}\label{eq:update_rule_intro}
\rho^{(n+1)} = \frac{1}{Z^{(n+1)}} \expB{\log\rho^{(n)} + \HE(\rho^{(n)})}, 
\end{align}
where $\HE(\sigma) = \cE_c^\dagger\log\cE_c(\sigma) - \cE^\dagger\log\cE(\sigma)$, $\cE^\dagger$ and $\cE_c^\dagger$ denote the adjoint channels (see e.g.~\cite{tomamichel_quantum_2015} for the definition) of $\cE$ and $\cE_c$, respectively, and $Z^{(n+1)}$ takes care of normalization. The estimate of the coherent information after the $n$-th iteration step is then given by $I_{\rm{coh}}^{(n)}=I_{\rm{coh}}(\rho^{(n+1)},\cE)$ and we prove that starting from an initial input state $\rho^{(1)}=\frac{1}{N}$ we have after
\begin{align}
\label{eq:iterations_bound}
\text{$n=\left\lceil\log\frac{N}{\varepsilon}\right\rceil$ iterations that $\big|I_{\rm{coh}}(\cE)-I_{\rm{coh}}^{(n)}\big|\leq\varepsilon$.}
\end{align}
Here, $N$ denotes the dimension of the input system of the channel. The number of iterations required in \eqref{eq:iterations_bound} corresponds to the worst-case and in practice, convergence is much faster for the channels we study. For example, we consider the qubit amplitude damping channel which is defined as follows
\begin{align}
\label{eq:amplitude_damping_channel_intro}
    \text{$\mathcal{E}^{AD}_{p}(\rho) =A_{0} \rho A_{0}^{\dagger}+A_{1} \rho A_{1}^{\dagger}$ with $A_{0} = | 0 \rangle\langle 0|+\sqrt{1-p}| 1\rangle\langle 1\vert, A_{1} = \sqrt{p}\vert 0\rangle\langle 1\vert$ for $p \in[0,1]$.}
\end{align}
Setting $p=0.3$, we compute $I_{\rm coh}(\cE^{AD}_{0.3})$ using the generalized Blahut-Arimoto algorithm up to an additive error of $\epsilon = 10^{-6}$ bits. We introduce an \textit{a posteriori} error criterion that allows us to find the error in our capacity estimate at each iteration and terminate the algorithm if this error is sufficiently small. When we use this criterion, the number of iterations is indeed much smaller than that required in \eqref{eq:iterations_bound} as shown in Figure~\ref{fig:CI_amplitude_damping_intro}.

In the following, we give an overview of our results showing that quantum Blahut-Arimoto algorithms can in fact be used to estimate several fundamental variants of channel capacities in quantum information theory.


\section{Overview of results}

\subsection{Quantum settings}

In addition to the coherent information of less noisy channels, other settings are of interest when one considers quantum channel capacities. For example, one may also ask the following questions:
\begin{enumerate}
    \item How much classical information can be sent per use of a noisy channel with classical input and quantum output? 
    \item How much information can be sent over a quantum channel in the presence of physical restrictions imposed by thermodynamics?
    \item How much quantum information can be sent over a quantum channel with entanglement assistance, i.e., with access to pre-shared entangled states between the two communicating parties?
\end{enumerate}
 These quantities are characterized by different capacities that are related to well-known quantities in quantum information theory, respectively the
 \begin{enumerate}
     \item Holevo quantity~\cite{holevo_capacity_1998,schumacher_sending_1997}
     \item Thermodynamic capacity~\cite{faist_thermodynamic_2018}
     \item Quantum mutual information~\cite{Bennett02}.
 \end{enumerate}  
 
We show that all the entropic optimization problems corresponding to these quantities as well as the coherent information discussed in the introduction can be cast into the same form and solved using an update rule similar to \eqref{eq:update_rule_intro} for suitable choice of $\mathcal{F}$. This generalization of the Blahut-Arimoto algorithm to the quantum setting provides the first efficient schemes for the calculation of the thermodynamic capacity of quantum channels, the coherent information of less noisy channels, and the quantum mutual information of quantum channels. 

\subsection{Algorithm speed and time complexity}
\label{sec:summary_res}
\begin{table}[ht]
\centering
\def\arraystretch{1.7}
\begin{tabular}{l|l|l}
  Channels & Quantity & Time Complexity \\
  \hline
  $X\rightarrow Y$ & Mutual information $I(\mathcal{E})$ & $\mathcal{O}\left(\frac{|X||Y|\log|X|}{\varepsilon}\right)$  \\
  $X\rightarrow B$ & Holevo quantity $\chi(\cE)$ & $\mathcal{O} \left(\frac{(|B|^3+|B|^2 |X|)\log |X|}{\varepsilon} + |X| |B|^3 \right)$ \\
  $A\rightarrow B$ & Thermodynamic capacity $T_\Gamma(\cE)$ & $\mathcal{O} \left(\frac{(|A|^3 + |A|^2 |B|^2+ |B|^3)\log |A|}{\varepsilon}\right)$ \\
   $A\rightarrow B$ & Coherent information $I_{\rm{coh}}(\cE)$ &$\mathcal{O} \left(\frac{\left(|A|^3 + |B|^3+ K^3+ |A|^2 (|B|^2+K^2)\right)\log |A|}{\varepsilon}\right)$ \\
   $A\rightarrow B$ & Quantum mutual information $I(\cE)$ & $\mathcal{O} \left(\frac{\left(|A|^3 + |B|^3+ K^3+ |A|^2 (|B|^2+K^2)\right)\log |A|}{\varepsilon}\right)$ 
\end{tabular}
\caption{Summary of the Blahut-Arimoto algorithms discussed in this work, with the overall asymptotic worst case complexity for an additive $\varepsilon$-approximation. $X$ and $Y$ refer to classical registers while $A$ and $B$ refer to quantum registers. The Kraus rank of $\cE$, i.e., the minimal number of Kraus operators required to represent $\cE$ (see e.g.~\cite{tomamichel_quantum_2015} for the definition of Kraus operators), is denoted by $K\leq |A| |B|$. The complexity for the classical case is the same as in the original works~\cite{Blahut72,Arimoto72}. For the coherent information of quantum channels, we require the channel $\cE$ to lie in the class of less noisy channels. }
\label{tbl:intro}
\end{table}

The number of iterations required by the generalized Blahut-Arimoto algorithm to obtain an \textit{a priori} error of $\varepsilon$ is $\mathcal{O}\left(\varepsilon^{-1}\log N\right)$, where $N$ denotes the input dimension of the channel. To find the worst-case time complexity of the full algorithm that estimates the capacity up to an additive error $\varepsilon$, one can simply multiply the required number of iterations with the complexity of the update rule given in \eqref{eq:update_rule_intro} for appropriate choice of $\mathcal{F}$. The full complexity of the algorithm for an $\varepsilon$-estimation of the capacity is given in Table~\ref{tbl:intro}.


\subsection{Additional speed-ups}

One can also construct an accelerated version of the generalized Blahut-Arimoto algorithm similar to the classical case~\cite{Matz04,Naja09}. The accelerated version lowers the number of iterations by a constant factor in the worst case compared to the standard version and hence, the asymptotic results given Table~\ref{tbl:intro} are not affected by this. It also gives rise to heuristics that provide a significant speed-up of the algorithm in practical instances as shown in Figures~\ref{fig:CQ_random_states}, \ref{fig:TC_amplitude_damping}, \ref{fig:CI_amplitude_damping} and \ref{fig:MI_amplitude_damping}. 

The \textit{a priori} bound on the error corresponds to the worst case and the number of iterations required is extremely conservative for most problems. Hence, we adapt classical techniques~\cite{Vontobel03} to find \textit{a posteriori} error bounds on the capacity estimate at each iteration of the algorithm (see also~\cite{li2019computing}). This error bound can then be used as a termination criterion for our numerics as described in Sections~\ref{sec:numerics-cq}, \ref{sec:numerics-thermo}, \ref{sec:numerics-ci} and \ref{sec:numerics-mi}. In fact, even if we use an alternative algorithm or heursitics, the \textit{a posteriori} criterion gives us guarantees on the error of our estimate compared to the actual capacity.

We notice that for the channels and capacities we consider, the \textit{a posteriori} error is achieved in significantly fewer iterations compared to the \textit{a priori} requirement. One of the explanations for this speedup is that the convergence of our algorithm can be exponentially fast in certain cases. Specifically, when certain entropy inequalities used by our convergence proof are strict inequalities, one can prove (see Proposition~\ref{prop:exp_convergence} for more details) that the error in the capacity estimate after the $t^{\rm th}$ iteration for a channel with input dimension $N$ is bounded by $(1-\delta)^t\log N$ for some $\delta\in (0, 1]$. 

\subsection{Comparison with previous work}

Previous works have considered other methods to approximate the solution of entropic optimization problems in quantum information theory~\cite{nagaoka1998algorithms,Nagaoka01,Shor03,Hayashi05, kato2009computational, tanuma2011revisiting, Sutter16,Fawzi18,Fawzi2019,Gour10,Girard14}. Here, we provide the first fully quantum version for Blahut-Arimoto type algorithms, give convergence guarantees and demonstrate numerically that these algorithms outperform all previously known algorithms for the specific optimization tasks. In particular, calculating the classical capacity of classical-quantum channels has been the subject of several previous works. Nagaoka~\cite{nagaoka1998algorithms} proposed a Blahut-Arimoto type algorithm to compute the Holevo quantity but did not provide an analysis of the convergence and the time complexity. 
A provable convergence bound for the estimation of the Holevo quantity was previously given in~\cite{Sutter16} by using an algorithm based on convex programming duality and smoothing techniques and has a time complexity of
\begin{align}
\text{$\mathcal{O}\left(\varepsilon^{-1}\max\{|X|,|B|\}|B|^3\sqrt{\log|X|}\right)$ versus our $\mathcal{O} \left(\varepsilon^{-1}(|B|^3+|B|^2 |X|)\log |X| + |X| |B|^3 \right)$.}
\end{align}
This suggests that our generalized Blahut-Arimoto algorithm is faster for the regime $|B|\gg|X|$ but slightly slower for $|X|\gg|B|$ in the worst case. However, our numerical results suggest that Blahut-Arimoto type algorithms converge much faster in practice than the algorithm given in~\cite{Sutter16}. Further, the techniques used in~\cite{Sutter16} could not be generalized to fully quantum settings. For other algorithms that compute the Holevo quantity, there is often no explicit complexity analysis given. In particular, in~\cite{kato2009computational,tanuma2011revisiting}, an algorithm to approximate the Holevo quantity of arbitrary quantum channels is provided. However, no convergence guarantees are given and numerics suggest that the runtime of the algorithm is quite high in practice~\cite{kato_voronoi_2008}. 

We compare our numerics with the best known runtimes given in~\cite{Fawzi18}, whose algorithm is based on semi-definite approximations of the matrix logarithm~\cite{Fawzi2019}. Our numerics suggest that for the specific optimization problems considered here, Blahut-Arimoto type algorithms converge faster (Section~\ref{sec:numerics-cq}). We emphasize that while the advantage of Blahut-Arimoto type algorithms might be explained by their conformity to the specific structure of entropy optimization problems, the method presented in~\cite{Fawzi18} is applicable for a wider range of problems.

The rest of the paper is structured as follows. We first give the general structure and convergence proof of Blahut-Arimoto alternating optimization algorithms in Section~\ref{sec:general} and then discuss in more detail the classical-quantum case (incorporating the classical case) in Section~\ref{sec:cq}, the thermodynamic capacity of quantum channels in Section~\ref{sec:thermo}, the coherent information of less noisy channels in Section~\ref{sec:ci} and the mutual information of quantum channels in Section~\ref{sec:mi}. Numerical results are provided for the different cases in Sections~\ref{sec:numerics-cq}, \ref{sec:numerics-thermo}, \ref{sec:numerics-ci} and \ref{sec:numerics-mi}, respectively. Finally, we discuss some conclusions and give an outlook (Section~\ref{sec:conclusion}).


\section{Structure of quantum Blahut-Arimoto algorithms}\label{sec:general}

\subsection{Notation} \label{sec:notation}

We label Hilbert spaces (which are assumed to be finite-dimensional) with capital letters e.g. $A$, $B$, and denote their dimension by $|A|$, $|B|$, and so on. The set of density operators on a system $A$, i.e., positive semi-definite matrices $\rho_A$ with $\Tr{\rho_A}=1$, is denoted $\cD(A)$. Whenever we work with a single system, we may omit the sub-index referring to the system. A quantum channel $\cE_{A \rightarrow B}$ is a linear completely positive trace-preserving map that maps density operators on a system $A$ to density operators on a system $B$. For simplicity, we may write $\cE$ instead of $\cE_{A \rightarrow B}$ if the input and output systems of the channel are clear from the context. The complementary channel $\cE^c_{A\to E}$ of a channel $\cE_{A \rightarrow B}$ is defined via taking the partial trace over $B$ of the Stinespring dilation $U_{A\to BE}$ of $\cE_{A\to B}$. 

For a density operator $\rho$, the von Neumann entropy is defined as $S(\rho)=- \Tr{\rho \log \rho}$ and for density operators $\rho,\sigma$ the quantum relative entropy is defined as
\begin{align}\label{eq:rel_entropy}
D(\rho \| \sigma)= \begin{cases}
\Tr{\rho \left(\log \rho -\log \sigma \right)} &\text{if $ \sigma \gg \rho  $}\\
\infty &\text{otherwise}\,,
\end{cases}
\end{align}
where the notation $\sigma \gg \rho$ denotes that the kernel of $\sigma$ is a subset of the kernel of $\rho$ (i.e., $\ker(\sigma) \subseteq \ker(\rho)$), and where we take the logarithm only on the support of the argument. We also work with discrete probability distributions, which we represent as vectors $\lambda=[\lambda_1,\dots,\lambda_m]$ with $\sum_i \lambda_i =1$ or alternatively as diagonal matrices with entries $\lambda_1,\dots,\lambda_m$ i.e., $\rho_\lambda=\sum_{i=1}^m \lambda_i \ket{i}\bra{i}$ in some pre-defined basis. The definition of the von Neumann entropy and the quantum relative entropy then simplifies for probability distributions to the Shannon entropy and the Kullback-Leibler divergence, respectively.


\subsection{Blahut-Arimoto type}

In the following we study a special entropic type of alternating optimization algorithms, called Blahut-Arimoto algorithms~\cite{Blahut72,Arimoto72}\footnote{In the broader context of numerical optimization, Blahut-Arimoto algorithms can be viewed as a proximal point method that maximizes a function iteratively with a penalty term if the current guess moves away from the previous guess~\cite{Matz04,Naja09}.}. The channel capacities we are interested in are typically given as a convex optimization problem over input states. One may write an extension function $J$ in two variables such that the maximization over both variables gives back the capacity of the channel. Performing the maximizations iteratively leads to an algorithm of the following form.

\begin{algorithm}[H] 
  \caption{Blahut-Arimoto algorithm: Iterative double optimization over density operators}
  \label{algo:general_AB}
   \begin{algorithmic}[1]
   \State Inputs: \begin{itemize}
   \item Initial guess $\rho^{(1)}_A \in \cD(A)$ with full support, i.e.,   $\rho^{(1)}_A>0$
   \item Function $J_{\gamma}: \cD(A) \times \cD(B) \mapsto \mathbb{R}$ with a coefficient $\gamma >0$
   \item \text{Update relations} $\mathcal{F}_1: \cD(A) \mapsto \cD(B)$ and $\mathcal{F}_2: \cD(B) \mapsto \cD(A)$
   \item Number of iteration steps $n$
   \end{itemize}
   \For{$t \in \{1,2,\dots, n\}$}
     \State $\sigma_B^{(t)} = \mathcal{F}_1\left(\rho_A^{(t)}\right)$
      \State $\rho_A^{(t+1)} = \mathcal{F}_2\left(\sigma_B^{(t)}\right)$
      \EndFor
   \State Outputs: $\rho_A^{(n+1)}$, $C(n)= J_{\gamma}\left(\rho_A^{(n+1)}, \sigma_B^{(n)}\right)$, 
   
   where $C(n)$ should approximate $C^\star =\max_{\rho_A,\sigma_B}J_{\gamma}(\rho_A,\sigma_B)$ for $n \rightarrow \infty \, $.
   \end{algorithmic}
\end{algorithm}

\begin{definition}[Blahut-Arimoto]\label{defi:AB_form}
Algorithm~\ref{algo:general_AB} together with the following conditions on $J$, $\mathcal{F}_1$ and $\mathcal{F}_2$ defines a quantum Blahut-Arimoto algorithm: For $\gamma>0$ and density operators $\sigma \gg \rho$,
\begin{align} \label{eq:form_J_gamma_general}
J_\gamma(\rho,\sigma)=-\gamma \, D( \rho||\sigma) + \Tr{\rho\HE(\sigma)} \in \mathbb{R}\,,
\end{align}
where $\HE$ is a Hermitian matrix valued super-operator on density operators such that $\Tr{\rho \HE(\sigma)}$ is continuous in $\sigma$ for $\sigma \gg \rho$. The update rules are then given by
\begin{align} \label{eq:update_rules}
\mathcal{F}_1(\rho) = \argmax_{\sigma \textnormal{ with } \sigma \gg \rho} J_\gamma(\rho,\sigma) \, , \quad  \mathcal{F}_2(\sigma) = \argmax_{\rho \textnormal{ with } \sigma \gg \rho} J_\gamma(\rho,\sigma)\, .
\end{align}

\end{definition}

Now, under certain conditions we can find analytic expressions for the optimizers in~\eqref{eq:update_rules}.\footnote{We can choose any optimizer in the above optimizations if there is not a unique one.}

\begin{lemma}[Update rules]\label{lem:update_rules}
For quantum Blahut-Arimoto algorithms with 
\begin{align} \label{eq:cond_on_F}
\Tr{\rho\left\{\HE\left(\sigma\right) - \HE\left(\rho\right)\right\}}\leq \gamma D(\rho||\sigma)\quad\text{for density operators $\sigma \gg \rho$,}
\end{align}
the update rules appearing in Algorithm~\ref{algo:general_AB} are of the form
\begin{align} 
&\mathcal{F}_1(\rho) =\rho \label{eq:update_rule_convergence_lemma_1} \, ,\\
&\text{$\mathcal{F}_2(\sigma)=\frac{1}{Z(\sigma)}\expB{\log\, \sigma+\frac{1}{\gamma} \HE\left(\sigma\right)}$ with $Z(\sigma)=\Tr{\expB{\log\, \sigma+\frac{1}{\gamma}\HE\left(\sigma\right)}}$}
\label{eq:update_rule_convergence_lemma_2}
\end{align}
for density operators $\rho$ and $\sigma >0$. 
\end{lemma}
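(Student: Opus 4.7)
The plan is to treat the two update rules separately, since each is a single-variable optimization of the explicit quantity $J_\gamma(\rho,\sigma)=-\gamma D(\rho\|\sigma)+\Tr{\rho F(\sigma)}$, and both admit closed-form maximizers by elementary manipulation of the relative entropy.

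For $\mathcal{F}_1(\rho)$, the plan is to show directly that $J_\gamma(\rho,\sigma)\le J_\gamma(\rho,\rho)$ for every $\sigma\gg\rho$. Writing out the difference gives
\begin{equation*}
J_\gamma(\rho,\rho)-J_\gamma(\rho,\sigma)=\gamma D(\rho\|\sigma)-\Tr{\rho\bigl(F(\sigma)-F(\rho)\bigr)},
\end{equation*}
which is nonnegative by hypothesis \eqref{eq:cond_on_F}. Since $D(\rho\|\rho)=0$ the value $J_\gamma(\rho,\rho)$ is attained (and is finite because $\rho\gg\rho$ and $\Tr{\rho F(\rho)}$ is finite by the continuity assumption on $F$), so $\sigma=\rho$ is a maximizer, establishing \eqref{eq:update_rule_convergence_lemma_1}. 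Any other maximizer is irrelevant since the algorithm only requires some optimizer.

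For $\mathcal{F}_2(\sigma)$ with $\sigma>0$ fixed, the plan is to rewrite $J_\gamma(\rho,\sigma)$ as (a negative multiple of) a single relative entropy against an auxiliary state, up to an additive constant independent of $\rho$. Defining $M(\sigma)=\log\sigma+\tfrac{1}{\gamma}F(\sigma)$ and $\tilde\sigma=\exp(M(\sigma))/Z(\sigma)$ with $Z(\sigma)=\Tr{\exp(M(\sigma))}$, one computes
\begin{equation*}
J_\gamma(\rho,\sigma)=-\gamma\Tr{\rho\bigl(\log\rho-M(\sigma)\bigr)}=-\gamma D(\rho\|\tilde\sigma)+\gamma\log Z(\sigma).
\end{equation*}
Since $D(\rho\|\tilde\sigma)\ge 0$ with equality iff $\rho=\tilde\sigma$, and since $\tilde\sigma>0$ so that $\tilde\sigma\gg\rho$ automatically holds, the unique maximizer is $\rho=\tilde\sigma$, giving \eqref{eq:update_rule_convergence_lemma_2}.

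No real obstacle is anticipated; the only subtle point is to justify that all quantities are well-defined. For $\mathcal{F}_1$ this uses the stated continuity of $\Tr{\rho F(\sigma)}$ on $\sigma\gg\rho$ together with $\rho\gg\rho$; for $\mathcal{F}_2$ it uses $\sigma>0$, which ensures $\log\sigma$ is a bounded Hermitian operator and therefore $M(\sigma)$ is Hermitian and $\tilde\sigma$ is a strictly positive density operator. The rest is direct algebra together with Klein's inequality (i.e., nonnegativity of quantum relative entropy with equality iff the states coincide).
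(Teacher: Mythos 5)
Your proposal is correct and follows essentially the same route as the paper: the first update rule is obtained by exactly the same rearrangement of \eqref{eq:cond_on_F}, and your treatment of $\mathcal{F}_2$ — completing $J_\gamma(\rho,\sigma)$ to $-\gamma D(\rho\|\tilde\sigma)+\gamma\log Z(\sigma)$ — is just the standard proof of Gibbs' variational principle written out inline, whereas the paper invokes that principle as a separate lemma (Lemma~\ref{lem:Gibbs}) with $\omega=\rho$ and $H=\log\sigma+\tfrac{1}{\gamma}\HE(\sigma)$.
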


\begin{proof} 
By~\eqref{eq:cond_on_F} together with the definition in~\eqref{eq:form_J_gamma_general}, we find
\begin{align}
J_\gamma(\rho,\rho) = \Tr{\rho \HE\left(\rho\right)}\geq \Tr{\rho \HE\left(\sigma\right)} - \gamma D(\rho||\sigma) = J_\gamma(\rho,\sigma)\quad\text{for density operators $\sigma \gg \rho$.}
\end{align}
Hence, the fist update rule is given by 
\begin{align}
\mathcal{F}_1(\rho) =  \argmax_{\sigma \gg \rho}J_\gamma(\rho,\sigma) = \rho \, .
\end{align}
For the second update rule we assume that $\sigma >0$ and hence the optimizer $\rho$ can be chosen from the set of density operators. The form of the optimizer follows directly from Gibbs' variational principle (Lemma~\ref{lem:Gibbs}). To see this, we write for $\sigma >0$,
\begin{align}
    \argmax_\rho J_\gamma(\rho, \sigma)  =  \argmax_\rho - \Tr{\rho \log \rho} +  \Tr{\rho \left\{ \log \sigma + \frac1\gamma \HE(\sigma) \right\}}\, .
\end{align}
Setting $\omega = \rho$ and $H = \log \sigma + \frac1\gamma \HE(\sigma)$ in Lemma~\ref{lem:Gibbs} leads to the claimed form of the second update rule.
\end{proof}\newline

The next theorem shows convergence of quantum Blahut-Arimoto algorithms under some technical assumptions (which will be satisfied for the applications in Sections~\ref{sec:cq}-\ref{sec:ci}).

\begin{theorem}[Convergence]\label{thm:convergence}
For quantum Blahut-Arimoto algorithms together with a strictly positive definite initial state $\rho^{(1)}>0$ on a system $A$ and 
\begin{align} \label{eq:bounding_condition}
0 \leq \Tr{\rho\left\{\HE\left(\sigma\right) - \HE\left(\rho \right)\right\} }  \leq \gamma D(\rho||\sigma)\quad\text{for density operators $\sigma \gg \rho$,}
\end{align}
we have that $C(n)$ of Algorithm~\ref{algo:general_AB} is monotonically increasing and converges for $n\rightarrow \infty$ to
\begin{align}
C^{\star}= \max_{\rho, \sigma \textnormal{ with } \sigma \gg \rho} J_\gamma(\rho, \sigma)
\end{align}
with the approximation error bounded as 
\begin{align}
|C^\star - C(n)| \leq \frac{ \gamma D\left(\rho^\star\middle\|\rho^{(1)}\right)}{n} \, ,
\end{align} 
where $\rho^\star$ is an optimizer (possibly not unique) that achieves the capacity $C^\star$. If $\rho^{(1)}$ is chosen to be the maximally mixed state, the approximation error is bounded as $ |C^\star - C(n)| \leq \frac{\gamma\log|A|}{n}$.
\end{theorem}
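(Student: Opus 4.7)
The plan is to establish three ingredients: monotonicity of the sequence $C(t)$, a ``partition-function identity'' that gives a one-line representation of $J_\gamma(\rho,\sigma^{(t)})$, and a telescoping argument built on condition~\eqref{eq:bounding_condition}.

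First I would verify monotonicity. Since Lemma~\ref{lem:update_rules} gives $\sigma^{(t+1)}=\mathcal{F}_1(\rho^{(t+1)})=\rho^{(t+1)}$ and $\rho^{(t+1)}$ is a maximizer of $J_\gamma(\cdot,\sigma^{(t)})$, one can chain
\begin{align*}
C(t+1)=J_\gamma(\rho^{(t+2)},\sigma^{(t+1)})\geq J_\gamma(\rho^{(t+1)},\sigma^{(t+1)})=J_\gamma(\rho^{(t+1)},\rho^{(t+1)})\geq J_\gamma(\rho^{(t+1)},\sigma^{(t)})=C(t),
\end{align*}
where the second inequality uses the definition of $\mathcal{F}_1$.

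Next I would derive the key identity. From the explicit form of $\mathcal{F}_2$ in~\eqref{eq:update_rule_convergence_lemma_2}, taking logs yields $\log\rho^{(t+1)}=\log\sigma^{(t)}+\gamma^{-1}\HE(\sigma^{(t)})-\log Z(\sigma^{(t)})\,\mathbb{1}$. Substituting $\HE(\sigma^{(t)})=\gamma[\log\rho^{(t+1)}-\log\sigma^{(t)}+\log Z(\sigma^{(t)})\,\mathbb{1}]$ into the definition~\eqref{eq:form_J_gamma_general} of $J_\gamma(\rho,\sigma^{(t)})$, the $\log\sigma^{(t)}$ terms cancel and one obtains, for every density operator $\rho$,
\begin{align*}
J_\gamma(\rho,\sigma^{(t)})=-\gamma D(\rho\|\rho^{(t+1)})+\gamma\log Z(\sigma^{(t)}).
\end{align*}
Setting $\rho=\rho^{(t+1)}$ identifies $\gamma\log Z(\sigma^{(t)})=C(t)$, so that $J_\gamma(\rho,\sigma^{(t)})=C(t)-\gamma D(\rho\|\rho^{(t+1)})$ for all valid $\rho$.

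Third, I would bound $C^\star - C(t)$. Pick any optimizer $\rho^\star$; by the same fixed-point reasoning one may take $\sigma^\star=\rho^\star$, so $C^\star=J_\gamma(\rho^\star,\rho^\star)=\Tr{\rho^\star\HE(\rho^\star)}$. Using $\sigma^{(t)}=\rho^{(t)}$ and applying condition~\eqref{eq:bounding_condition} with $\rho\leftarrow\rho^\star$, $\sigma\leftarrow\rho^{(t)}$ (which is legal because the update rule keeps $\rho^{(t)}>0$, hence $\rho^{(t)}\gg\rho^\star$), we get
\begin{align*}
0\leq C^\star-J_\gamma(\rho^\star,\sigma^{(t)})=\gamma D(\rho^\star\|\rho^{(t)})-\Tr{\rho^\star\{\HE(\rho^{(t)})-\HE(\rho^\star)\}}\leq \gamma D(\rho^\star\|\rho^{(t)}).
\end{align*}
Combining with the identity from the previous step applied at $\rho=\rho^\star$ yields
\begin{align*}
C^\star-C(t)=\bigl[C^\star-J_\gamma(\rho^\star,\sigma^{(t)})\bigr]-\gamma D(\rho^\star\|\rho^{(t+1)})\leq \gamma\bigl[D(\rho^\star\|\rho^{(t)})-D(\rho^\star\|\rho^{(t+1)})\bigr].
\end{align*}
Summing from $t=1$ to $n$ and using monotonicity of $C(t)$ gives $n\,[C^\star-C(n)]\leq\sum_{t=1}^n[C^\star-C(t)]\leq\gamma D(\rho^\star\|\rho^{(1)})$, which is the claimed bound; specializing $\rho^{(1)}=\mathbb{1}/|A|$ reduces $D(\rho^\star\|\rho^{(1)})=\log|A|-S(\rho^\star)\leq\log|A|$.

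The main obstacle is deriving the ``partition-function identity'' cleanly and making sure the iterates stay strictly positive so that all relative entropies are finite; positivity is automatic because $\mathcal{F}_2$ outputs a normalized matrix exponential, and finiteness of $D(\rho^\star\|\rho^{(t)})$ follows. One also needs to know that an optimizer $\rho^\star$ exists at all, which follows from compactness of the state space together with upper semicontinuity of $\rho\mapsto\max_\sigma J_\gamma(\rho,\sigma)=J_\gamma(\rho,\rho)=\Tr{\rho\HE(\rho)}$ (continuous by the assumption on $\HE$).
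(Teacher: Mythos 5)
Your proposal is correct and follows essentially the same route as the paper's proof: both use the update rule to identify $C(t)=\gamma\log Z^{(t+1)}$, bound $C^\star-C(t)$ by the decrement $\gamma\left[D\left(\rho^\star\middle\|\rho^{(t)}\right)-D\left(\rho^\star\middle\|\rho^{(t+1)}\right)\right]$ via the nonnegativity assumption in~\eqref{eq:bounding_condition}, telescope, and invoke monotonicity of $C(t)$. Your ``partition-function identity'' is just a cleaner packaging of the same computation the paper performs term by term.
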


\begin{proof}
Let $t\in \mathbb{N}$ and let $\rho^{(t)} \in \cD(A)$ be a density operator in the $t$-th iteration step of Algorithm~\ref{algo:general_AB}. Note that $\rho^{(t)} >0$ for all $t$, since the exponentiation of a matrix in the update rule~\eqref{eq:update_rule_convergence_lemma_2} ensures full support.  First, using the update rules given in~\eqref{eq:update_rule_convergence_lemma_1} and in~\eqref{eq:update_rule_convergence_lemma_2}, and setting $Z^{(t+1)}=Z(\rho ^{(t)})$, we find 
 \begin{align}
   C(t)&= J_\gamma(\rho^{(t+1)},\sigma^{(t)}=\rho^{(t)})  \\
    &=   \, \Tr{  \rho^{(t+1)} \left[ - \gamma \log \rho^{(t+1)} +\gamma \log \rho^{(t)} + \, \HE\left(\rho^{(t)}\right)\right]}\\
    &=  \, \Tr{  \rho^{(t+1)} \left[ - \gamma \left\{\log \rho^{(t)}  + \frac{1}{\gamma} \HE\left(\rho^{(t)}\right) - \log Z^{(t+1)}\right\} +\gamma \log \rho^{(t)} + \, \HE\left(\rho^{(t)}\right)\right]}\\
     &=   \,  \gamma \, \log Z^{(t+1)}\,.
    \end{align}
Let $\rho^{\star}= \argmax_{\rho} J_\gamma(\rho,\rho)$, and note that such a $\rho^{\star}$ always exists because $J_\gamma(\rho,\rho)$ is continuous by Definition~\ref{defi:AB_form} and we optimize over the compact set of all density operators. Further, we have $C^{\star}=J_\gamma(\rho^\star,\rho^\star)=\Tr{\rho^\star \, \HE(\rho^\star)}$.
Using $C(t)= \gamma \, \log Z^{(t+1)}$, we derive an upper bound on the additive error at the iteration step $t$
\begin{align}
    \Tr{\rho^\star(\log\rho^{(t+1)} - \log\rho^{(t)})} 
   & = -\frac{1}{\gamma}C(t) + \frac1\gamma \Tr{\rho^\star \HE(\rho^{(t)}) } \label{eq:conv_proof_eq1}  \\
    &= \frac{1}{\gamma}\left (C^\star - C(t) + \Tr{\rho^\star \left\{\HE\left(\rho^{(t)}\right) - \HE\left(\rho^\star \right)\right\}}\right)  \label{eq:conv_proof_eq2}\\
    &\geq \frac{1}{\gamma}|C^\star - C(t)|  \, ,\label{eq:conv_proof_eq3}
\end{align}
where we used the assumption~\eqref{eq:bounding_condition} of the lemma in the last step and that $C^{\star}\geq C(t)$, since $C^{\star}=J_\gamma(\rho^\star,\rho^\star)=\max_{\rho, \sigma \textnormal{ with } \sigma \gg \rho} J_\gamma(\rho, \sigma)$ is the maximum value that $J_\gamma$ can achieve.
The sum over the additive error terms is upper bounded by a telescopic sum, which can itself be upper bounded as follows
\begin{align} 
\sum_{t = 1}^k \Tr{\rho^\star\left(\log\rho^{(t+1)} - \log\rho^{(t)}\right)}&=  \Tr{\rho^\star\left(\log\rho^{(k+1)} - \log\rho^{(1)}\right)}\\
&=  D\left(\rho^\star\middle\|\rho^{(1)}\right) - D\left(\rho^\star\middle\|\rho^{(k+1)}\right) \\
& \leq D\left(\rho^\star\middle\|\rho^{(1)}\right) \, ,
\end{align}
where we used the positivity of the quantum relative entropy in the last inequality. We conclude
\begin{align}\label{eq:error_bound_in_proof}
\sum_{t=1}^\infty |C^\star - C(t)| \leq \gamma D\left(\rho^\star\middle\|\rho^{(1)}\right)\,,
\end{align}
and since $D\left(\rho^\star\middle\|\rho^{(1)}\right)$ is finite for $\rho^{(1)} \in \cD(A) $ with full support, $C(n)$ converges to $C^{\star}$ for $n\rightarrow \infty$. Moreover, $C(n)$ increases monotonically in $n$ by construction of the updated states as optimizers in Lemma~\ref{lem:update_rules}. Together with~\eqref{eq:error_bound_in_proof}, this implies the error bound stated in the theorem.
\end{proof}\newline

The value of the parameter $\gamma$ appearing in Definition~\ref{defi:AB_form} corresponds to accelerated versions of the Blahut-Arimoto algorithm and hence, $\gamma$ should be chosen as small as possible. However, to have provable convergence, $\gamma$ is lower bounded by the requirement in Theorem~\ref{thm:convergence} that~\eqref{eq:bounding_condition} is satisfied. The standard algorithm uses the data processing inequality for the quantum relative entropy to find a $\gamma$ satisfying~\eqref{eq:bounding_condition}, whereas the accelerated algorithm uses relative entropy coefficients similar to~\eqref{eq:contraction-coefficient} to establish tight lower bounds on $\gamma$. Note that calculating these coefficients is itself a difficult optimization problem and may be intractable for many channels. However, as a heursitic, we may still introduce an adaptive acceleration parameter $\gamma^{(t)}$ at each iteration, similar to the classical case~\cite{Matz04}.

\begin{rmk}[Acceleration heuristic]
For $D(\rho^{(t)}||\rho^{(t-1)})\neq 0$, define an adaptive acceleration parameter
\begin{align}\label{eq:adaptivegamma}
    \gamma^{(t+1)}=\frac{\Tr{\rho^{(t)}\left\{\HE\left(\rho^{(t-1)}\right) - \HE\left(\rho^{(t)} \right)\right\}}}{D\left(\rho^{(t)}\middle\|\rho^{(t-1)}\right)} \, .
\end{align}
This adaptive acceleration parameter may be used when strict lower bounds on $\gamma$ cannot be calculated. The proof of convergence from Theorem~\ref{thm:convergence} no longer holds since the capacity estimate does not necessarily increase monotonically when the adaptive acceleration method is used. Nevertheless, we find in practice that the adaptive acceleration heuristic provides a significant speed-up (see Figures~\ref{fig:CQ_random_states}, \ref{fig:TC_amplitude_damping}, \ref{fig:CI_amplitude_damping} and \ref{fig:MI_amplitude_damping}).
\end{rmk} 

Finally, regardless of whether we use the adaptive acceleration parameter or otherwise, the following proposition allows us to terminate the algorithm when the \textit{a posteriori} error is sufficiently small (see also~\cite{li2019computing}).

\begin{prop}[Termination criteria]\label{prop:dynamic_termination}
For quantum Blahut-Arimoto algorithms that satisfy~\eqref{eq:bounding_condition} let us denote the largest eigenvalue of $\HE(\rho^{(t)})$ by $\alpha^{(t)}_{\max}$ for $t \in \mathbb{N}$. Then, we have  
\begin{align} \label{eq:prop_posteriori_criteria}
C(t) \leq C^\star \leq \alpha^{(t)}_{\max}
\end{align}
with $C^{\star}= \max_{\rho, \sigma \textnormal{ with } \sigma \gg \rho} J_\gamma(\rho, \sigma)$. Hence, in particular, $|C^\star - C(t)| \leq \alpha^{(t)}_{\max} - C(t) $  provides a bound on the \textit{a posteriori} error.
\end{prop}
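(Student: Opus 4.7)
The lower bound $C(t) \leq C^\star$ is immediate from the definition, since $C(t) = J_\gamma(\rho^{(t+1)}, \sigma^{(t)})$ is just one feasible value of $J_\gamma$, while $C^\star$ is its maximum over all feasible pairs.

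The interesting direction is the upper bound. My plan is to write $C^\star$ using the already-noted identity (from the proof of Theorem~\ref{thm:convergence}) that the optimum is attained on the diagonal, $C^\star = J_\gamma(\rho^\star,\rho^\star) = \Tr{\rho^\star\HE(\rho^\star)}$. To connect this to the iterate $\rho^{(t)}$, I would invoke the \emph{left} inequality of the assumption~\eqref{eq:bounding_condition} with the choice $\rho = \rho^\star$ and $\sigma = \rho^{(t)}$. This is legitimate because the update rule~\eqref{eq:update_rule_convergence_lemma_2} guarantees $\rho^{(t)}>0$ for every $t$, so $\rho^{(t)} \gg \rho^\star$ holds automatically. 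The inequality then reads
\begin{align}
0 \leq \Tr{\rho^\star\bigl(\HE(\rho^{(t)}) - \HE(\rho^\star)\bigr)},
\end{align}
which rearranges to $C^\star = \Tr{\rho^\star \HE(\rho^\star)} \leq \Tr{\rho^\star \HE(\rho^{(t)})}$.

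To finish, I would use the standard variational bound: for any Hermitian operator $H$ with largest eigenvalue $\lambda_{\max}(H)$ and any density operator $\rho$, one has $\Tr{\rho H}\leq \lambda_{\max}(H)$ (write $H \preceq \lambda_{\max}(H)\, \id$ and take the trace against $\rho$). Applied with $H = \HE(\rho^{(t)})$ and $\rho = \rho^\star$, this yields
\begin{align}
\Tr{\rho^\star \HE(\rho^{(t)})} \leq \alpha^{(t)}_{\max}.
\end{align}
Chaining the two inequalities gives $C^\star \leq \alpha^{(t)}_{\max}$, and subtracting $C(t)$ from both sides of $C(t)\leq C^\star \leq \alpha^{(t)}_{\max}$ gives the stated a~posteriori gap.

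There is no serious obstacle here; the only subtle point is making sure the hypothesis $\sigma \gg \rho$ in~\eqref{eq:bounding_condition} is met when we plug in $\rho^\star$ and $\rho^{(t)}$, which is exactly why the algorithm insists on a full-support initialization and uses the exponential update rule that preserves full support. The whole argument is really just the observation that the optimizer $\rho^\star$ satisfies a fixed-point / variational characterization and that $\HE(\rho^{(t)})$ can serve as a linearizing ``Lagrange-multiplier'' witness whose top eigenvalue upper bounds the value.
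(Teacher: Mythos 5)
Your proof is correct and follows essentially the same route as the paper's: both apply the left inequality of~\eqref{eq:bounding_condition} with $\rho=\rho^\star$, $\sigma=\rho^{(t)}$ to get $C^\star=\Tr{\rho^\star\HE(\rho^\star)}\leq\Tr{\rho^\star\HE(\rho^{(t)})}$ and then bound this by the largest eigenvalue of $\HE(\rho^{(t)})$. Your write-up is if anything slightly more careful, since you make explicit both the variational bound $\Tr{\rho H}\leq\lambda_{\max}(H)$ and the reason the support condition $\rho^{(t)}\gg\rho^\star$ holds, which the paper leaves implicit.
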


\begin{proof}
For any $t \in \mathbb{N}$, we have $\Tr{\rho^\star(\HE(\rho^{(t)}) - \HE(\rho^\star))}  \geq 0$ from~\eqref{eq:bounding_condition}, where $\rho^\star$ is the optimizer (possibly not unique) that achieves the capacity $C^\star$.  Therefore, there must exists at least one eigenvalue $\alpha_i^{(t)}$ of $\HE(\rho^{(t)})$ that satisfies $\alpha_i^{(t)} \geq \Tr{\rho^\star \, \HE(\rho^\star)} = C^\star$. In particular, $\alpha_{\max}^{(t)}\geq C^\star$. Noting that $C(t) \leq C^\star$ finishes the proof.
\end{proof}\\
\newline
In particular, if $\rho^{(t)}$ converges to a full rank optimizer $\rho^{\star}$, then we may take the logarithm on both sides of \eqref{eq:update_rule_convergence_lemma_2} to show that $\lim_{t\rightarrow\infty}\alpha^{(t)}_{\max} = C^\star$. This shows that the \textit{a posteriori} criterion is indeed achieved for some sufficiently large $t$ for full rank optimizers. Next, we show in the following proposition that under certain technical conditions, the speed of convergence is exponential.

\begin{prop}[Exponential convergence]\label{prop:exp_convergence}
For quantum Blahut-Arimoto algorithms satisfying~\eqref{eq:bounding_condition} with $a\in (0, \gamma]$ such that
\begin{align} \label{eq:assumption_exp_convergence_prop}
\inf_{\rho\neq\sigma}\frac{\Tr{\rho\left\{\HE\left(\sigma\right) - \HE\left(\rho\right)\right\}}}{D(\rho||\sigma)} = a \quad\text{for density operators $\sigma \gg \rho$}\,,
\end{align}
there exists a unique optimizer $\rho^\star$ with $C^{\star}= J_\gamma(\rho^\star, \rho^\star)$, and the algorithm converges exponentially fast. That is, for all $t\in \mathbb{N}$ we have
\begin{align}\label{eq:exp_conv_2}
C^\star - C(t+1) &\leq \gamma\left(1 - \delta\right)^{t}D(\rho^\star||\rho^{(1)})\,,
\end{align}
where $\delta=\frac{a}{\gamma}\in(0 ,1]$. Moreover, when $\rho^{(1)}$ is chosen to be the maximally mixed state, we obtain $C^\star - C(t+1) \leq \gamma(1-\delta)^t\log|A|$. 
\end{prop}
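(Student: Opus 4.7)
The plan is to sharpen the inequality chain in the proof of Theorem~\ref{thm:convergence} using the uniform lower bound in~\eqref{eq:assumption_exp_convergence_prop}, turning the summable decay into a geometric one. The key observation is that, for any optimizer $\rho^\star$ satisfying $C^\star = J_\gamma(\rho^\star,\rho^\star) = \Tr{\rho^\star \HE(\rho^\star)}$, the step~\eqref{eq:conv_proof_eq1}--\eqref{eq:conv_proof_eq2} of that proof is actually an exact identity,
\begin{align*}
\gamma\bigl[D(\rho^\star\|\rho^{(t)}) - D(\rho^\star\|\rho^{(t+1)})\bigr] = \bigl(C^\star - C(t)\bigr) + \Tr{\rho^\star\bigl\{\HE(\rho^{(t)}) - \HE(\rho^\star)\bigr\}}\,,
\end{align*}
and not just an upper bound. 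First I would note that $\rho^{(t)} > 0$ for all $t\geq 1$ by the exponential form of~\eqref{eq:update_rule_convergence_lemma_2}, so hypothesis~\eqref{eq:assumption_exp_convergence_prop} applies with $\rho=\rho^\star$, $\sigma=\rho^{(t)}$ and lower-bounds the trace term by $a\, D(\rho^\star\|\rho^{(t)})$. Combined with the monotonicity $C^\star\geq C(t)$ already established in Theorem~\ref{thm:convergence}, the identity above yields the contraction
\begin{align*}
D(\rho^\star\|\rho^{(t+1)}) \leq (1-\delta)\,D(\rho^\star\|\rho^{(t)})\qquad\text{with } \delta = a/\gamma\,.
\end{align*}

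Iterating this bound $t$ times from the initial state gives $D(\rho^\star\|\rho^{(t+1)}) \leq (1-\delta)^{t}\, D(\rho^\star\|\rho^{(1)})$. To convert this relative-entropy decay into the claimed decay of $C^\star - C(t+1)$, I would revisit the identity above at step $t+1$, drop the nonnegative pieces $\gamma D(\rho^\star\|\rho^{(t+2)})$ and $\Tr{\rho^\star\{\HE(\rho^{(t+1)}) - \HE(\rho^\star)\}}$ (the latter being nonnegative by~\eqref{eq:bounding_condition}), and obtain $C^\star - C(t+1) \leq \gamma D(\rho^\star\|\rho^{(t+1)})$. Chaining with the exponential decay of the relative entropy then produces the first estimate in~\eqref{eq:exp_conv_2}; a maximally mixed initialization further bounds $D(\rho^\star\|\rho^{(1)}) \leq \log|A|$ and gives the second.

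For uniqueness, suppose $\rho_1^\star\neq \rho_2^\star$ both achieve $C^\star$. The contraction above applies with $\rho^\star$ set to either one, so both $D(\rho_1^\star\|\rho^{(t)})$ and $D(\rho_2^\star\|\rho^{(t)})$ decay to zero along the same algorithm-generated sequence $\rho^{(t)}$ (starting from any full-rank state). Pinsker's inequality then forces $\rho^{(t)}$ to converge in trace norm to both $\rho_1^\star$ and $\rho_2^\star$, contradicting distinctness. The only subtle point I anticipate when writing the full proof is checking the $\sigma\gg\rho$ condition for invoking~\eqref{eq:assumption_exp_convergence_prop} at every step; this is immediate from the full support of the iterates produced by~\eqref{eq:update_rule_convergence_lemma_2}, so the rest is algebraic bookkeeping on top of the inequalities already set up in the proof of Theorem~\ref{thm:convergence}.
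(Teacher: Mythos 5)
Your proposal is correct and follows essentially the same route as the paper's proof: it uses the exact identity behind~\eqref{eq:conv_proof_eq1}--\eqref{eq:conv_proof_eq2} together with the lower bound $a\,D(\rho^\star\|\rho^{(t)})$ on the trace term to obtain the contraction $D(\rho^\star\|\rho^{(t+1)})\leq(1-\delta)D(\rho^\star\|\rho^{(t)})$, then converts this into $C^\star-C(t+1)\leq\gamma D(\rho^\star\|\rho^{(t+1)})$ and iterates, with uniqueness via Pinsker's inequality exactly as in the paper. No gaps; the full-support check for invoking the hypothesis is handled the same way.
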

\begin{proof}
We first establish the uniqueness of the optimizer. From~\eqref{eq:conv_proof_eq1} and~\eqref{eq:conv_proof_eq2} in the proof of Theorem~\ref{thm:convergence}, we have
\begin{align}
    D(\rho^\star||\rho^{(t)}) - D(\rho^\star||\rho^{(t+1)}) &=  \frac{1}{\gamma}\left (C^\star - C(t) + \Tr{\rho^\star \left\{\HE\left(\rho^{(t)}\right) - \HE\left(\rho^\star \right)\right\}}\right) \label{eq:conv_proof_eq2_local}\\
    &\geq \frac{a}{\gamma}D(\rho^\star||\rho^{(t)}) \\
    \implies D(\rho^\star||\rho^{(t+1)}) &\leq \left(1 - \frac{a}{\gamma}\right)D(\rho^\star||\rho^{(t)}) \, . \label{eq:exp_conv_implication}
\end{align}
Note that~\eqref{eq:exp_conv_implication} implies that there is a unique optimizer $\rho^\star$ since this inequality holds for each $t$. Indeed, to see this, we start by noting that~\eqref{eq:exp_conv_implication} implies $\lim_{t\rightarrow\infty} D(\rho^\star||\rho^{(t)}) =0$. Then, Pinsker's inequality~\cite[Theorem 5.38]{watroustheory} implies that also $\lim_{t\rightarrow\infty}  \norm{\rho^{(t)}-\rho^{\star}}_1 =0$. Since the trace distance is a norm, we find $\lim_{t\rightarrow\infty} \rho^{(t)} = \rho^{\star}$. Since this must be true for any optimizer $\rho^{\star}$, the optimizer must be unique.
From~\eqref{eq:bounding_condition} and~\eqref{eq:conv_proof_eq2_local}, we find
\begin{align}
    C^\star - C(t) &\leq \gamma(D(\rho^\star||\rho^{(t)}) - D(\rho^\star||\rho^{(t+1)})) \\
    &\leq \gamma D(\rho^\star||\rho^{(t)}) \, .
\end{align}
Noting that this argument holds for all $t$ finishes the proof.
\end{proof}

\section{Holevo quantity of classical quantum channels}\label{sec:cq}

\subsection{Definitions}

Here we consider how to compute the capacity of cq channels, as given by the Holevo quantity. A cq channel can be described by a set of input-output pairs $\{(x,\tau_x)\}_{x \in \{1,2,\dots,N\}}$, i.e., the channel takes $x \in \{1,2,\dots,N\}$ as an input and provides the quantum states $\tau_x \in \cD(B)$ as outputs. Alternatively, we may consider cq channels as quantum channels $\cE_{\X \rightarrow \Y}$ with $|X|=N$, defined by the completely positive trace-preserving mapping $\cE_{\X \rightarrow \Y}: \rho_X \mapsto \sum_x  \bra{x} \rho_X \ket{x} \, (\tau_x)_B$. Restricting the quantum channel to the classical input states $\ket{x}\bra{x}_X$ then gives us back the original cq channel. For an input distribution vector $\lambda$, the output of the cq channel corresponds to
\begin{align}
\cE(\rho_\lambda) = \sum_x \lambda_x \, \cE(\ket{x}\bra{x})=  \sum_x \lambda_x \, \tau_x\,,
\end{align}
where $\lambda_i$ denotes the $i$-th component of the probability vector $\lambda$ and $\rho_\lambda = \sum_k\lambda_k\ket{k}\bra{k}$. As shown by Holevo, Schumacher and Westmoreland~\cite{holevo_capacity_1998,schumacher_sending_1997}, the classical capacity of a cq channel is given by the Holevo quantity
\begin{align}\label{cq_capacity}
\text{$\chi(\cE) = \max_{\lambda}\underbrace{\sum_i\lambda_i \, \Tr{\tau_{\cE,i}\left\{\log \tau_{\cE,i} - \log \cE(\rho_\lambda)\right\}}}_{=I(\lambda, \cE)}$ \, \, \, \, with $\tau_{\cE,i}=\cE(\ket{i}\bra{i})$,}
\end{align}
where $I(\lambda, \cE)$ is the mutual information between the input and output registers.

\subsection{Blahut-Arimoto algorithm}

The following algorithm to compute the Holevo quantity is a quantum Blahut-Arimoto algorithm and was first proposed by Nagaoka~\cite{nagaoka1998algorithms}. Similar to the classical case~\cite{Matz04}, we will show that one can accelerate the convergence of Blahut-Arimoto type algorithms. To do so, we will use the contraction coefficient of relative entropy~\cite{Ruskai99}, given for a channel $\cE$ as
\begin{align}\label{eq:contraction-coefficient}
\eta_{\rm con}(\cE)=\sup_{\rho\neq\sigma} \frac{D\left(\cE(\rho)\middle\|\cE(\sigma)\right)}{D\left(\rho\middle\|\sigma\right)} \in [0, 1]\ .
\end{align}  
Note that $0 \leq \eta_{\rm con}(\cE) \leq 1$ by the nonnegativity of relative entropy and by the data processing inequality. We now define the following two variable extension of the mutual information for $\gamma \in [\eta_{\rm con}(\cE),1]$.
\begin{align}
   J_\gamma(\lambda, \mu, \cE)  &= - \gamma D(\lambda ||\mu) + \sum_i\lambda_i \Tr{\tau_{\cE,i}\left\{\log \tau_{\cE,i} - \log \cE(\rho_\mu)\right\}} \\
   &= I(\lambda,\cE) - \gamma D(\lambda ||\mu) + D(\cE(\rho_\lambda) || \cE(\rho_\mu)) \, ,\label{eq:J_cq_expression}
\end{align}
where $\rho_\lambda = \sum_k\lambda_k\ket{k}\bra{k}$, $\rho_\mu = \sum_k\mu_k\ket{k}\bra{k}$ (with $\rho_\mu \gg \rho_\lambda$) and $\tau_{\cE,i}=\cE(\ket{i}\bra{i})$. Note that $\gamma = 1$ corresponds to the standard algorithm and if it is known that $\eta_{\rm con}(\cE) < 1$, then one may accelerate the algorithm. The condition $\eta_{\rm con}(\cE) < 1$ implies that $\forall \rho\neq\sigma,\ D(\cE(\rho)||\cE(\sigma)) < D(\rho||\sigma)$ i.e. the map strictly decreases the relative entropy. We then bring $J_\gamma$ into the form
\begin{align}
 J_\gamma &= \ - \gamma  \,\Tr{\rho_\lambda \log \rho_\lambda} +\Tr{ \rho_\lambda (\gamma \log \rho_\mu +\HE(\rho_\mu))} \, \textnormal{with} \label{J_gamma_cq_standard_form}\\
 \HE(\rho_\mu) &= \sum_i \ket{i}\bra{i} \,  \Tr{\tau_{\cE,i}\left(\log \tau_{\cE,i} - \logs \cE(\rho_\mu)\right)} \, . \label{eq:defi_F_cq}
\end{align}
Note that $\Tr{\rho_\lambda \HE(\rho_\mu) }$ is continuous for all $\rho_\mu \gg \rho_\lambda$, since for $\lambda_i \neq 0$, we have $\mu_i \neq 0$ and hence $\ker {\left[ \cE(\rho_\mu)\right]}=\ker{\left[ \sum \mu_k  \tau_{\cE,k} \right]}\subset \ker{\left[ \tau_{\cE,i}\right]}$. Further, a calculation shows 
\begin{align}
&\Tr{\rho_{\lambda}\left\{\HE(\rho_{\mu}) - \HE(\rho_{\lambda})\right\}}=D(\cE(\rho_{\lambda})||\cE(\rho_{\mu}))  \textnormal{ and hence } \label{eq:useful_form_bounding_cond_cq}\\
&0 \leq \Tr{\rho_{\lambda}\left\{\HE(\rho_{\mu}) - \HE(\rho_{\lambda})\right\}} \leq \gamma D(\rho_{\lambda}||\rho_{\mu})\,, \label{eq:requirement_cq}
\end{align} 
where we used the nonnegativity of the quantum relative entropy in the last inequality (together with $\gamma \in [\eta_{\rm con}(\cE),1]$). 
From Lemma~\ref{lem:update_rules} whose requirement is satisfied by~\eqref{eq:requirement_cq}, we find a double optimization form of the capacity $\chi(\cE)$
\begin{align}
\max_{\lambda, \mu \textnormal{ with } \rho_\mu \gg \rho_\lambda} J_\gamma(\lambda, \mu, \cE)&=\max_{\lambda} J_\gamma(\lambda, \lambda, \cE)=\max_{\lambda} I(\lambda, \cE) =\chi(\cE)\,.
\end{align}

Performing the two maximizations in $\max_{\lambda, \mu} J_\gamma(\lambda, \mu, \cE)$ iteratively, leads to the following algorithm (see Lemma~\ref{lem:update_rules} for the form of the update rules), which provably converges  to the Holevo quantity by Theorem~\ref{thm:convergence}. A detailed analysis of the time complexity can be found in Appendix~\ref{app:time_compl_cq}. 

\begin{algorithm}[H]\
  \caption{Accelerated Blahut-Arimoto algorithm for the Holevo quantity}
  \label{algo:AB_cq_acc}
   \begin{algorithmic}[1]
    \State Inputs: cq channel $\cE_{\X\rightarrow \Y}$ (given as a lookup table whose $i$-th entry is $\tau_{\cE,i} =\cE(\ket{i}\bra{i})$),  acceleration coefficient $\gamma \in [\eta_{\rm con}(\cE),1]$, and additive error $\varepsilon>0$
     \State Choose $\lambda^{(1)}_i=\frac{1}{|X|}$ for all $i \in \{1,2,\dots,|X| \}$
     \For{$t \in \{1,2,\dots,n= \lceil \gamma \log |X|/\varepsilon \rceil\}$}
        \State $\lambda^{(t+1)}_i =  \frac{1}{Z^{(t+1)}}\lambda^{(t)}_i\expB{\frac1\gamma D\left(\tau_{\cE,i} || \cE(\rho_{\lambda^{(t)}})\right)}$, where $\rho_{\lambda^{(t)}}= \sum_k\lambda_k^{(t)} \ket{k}\bra{k}$, $\tau_{\cE,i} = \cE(\ket{i}\bra{i})$ and $Z^{(t+1)}=\sum_i \lambda^{(t)}_i\expB{\frac1\gamma D\left(\tau_{\cE,i} || \cE(\rho_{\lambda^{(t)}})\right)}$
      \EndFor
   \State Outputs: $\lambda^{(n+1)}$, $\chi^{(n)}=J_\gamma(\lambda^{(n+1)}, \lambda^{(n)}, \cE)$ with $|\chi(\cE)-\chi^{(n)}| \leq \varepsilon$
   \end{algorithmic}
\end{algorithm}

\begin{rmk}[Concavity]\label{rmk:concavity}
The function $J_\gamma(\lambda, \mu, \cE) =- \gamma  \,\Tr{\rho_\lambda \log \rho_\lambda} +\Tr{ \rho_\lambda (\gamma \log \rho_\mu +\HE(\rho_\mu))}$ is concave in $\lambda$ (for fixed $\mu$). This follows from the concavity of the von Neumann entropy and by noting that the second term is linear in $\lambda$. In contrast to the two-variable function used in the classical Blahut-Arimoto algorithm~\cite{Blahut72,Arimoto72}, the function $J_\gamma$ is not concave in $\mu$. To see this, consider the classical channel given by the stochastic matrix
\begin{align}
Q = \begin{pmatrix}
\frac{1}{2} & 1 & 0\\
\frac{1}{2} & 0 & 1\\
0 & 0 & 0\\
\end{pmatrix}\,,
\end{align}
$\lambda = (\frac{1}{2},\frac{1}{2},0), \, \mu_1 = (\frac{1}{2},\frac{1}{2},0)$, and $\mu_2 = (\frac{1}{2}, 1/4, 1/4)$. Indeed, for $a = 0.7$ we have $J_1(\lambda, a\mu_1 +(1-a)\mu_2, \cE)  < a J_1(\lambda, \mu_1, \cE)+(1-a)J_1(\lambda, \mu_2, \cE)$ showing that the function is not concave. On the other hand, for $a = 0.3$ we have $J_1(\lambda, a\mu_1 +(1-a)\mu_2, \cE)  > a J_1(\lambda, \mu_1, \cE)+(1-a)J_1(\lambda, \mu_2, \cE)$ showing that the function is not convex either. 
\end{rmk}

Arimoto~\cite{Arimoto72} showed in the classical case that if the optimizer that achieves the capacity is unique and full rank, there exists an integer $N\in \mathbb{N}$ such that for all $t>N$, the convergence rate is exponential. That is, for $t>N$, there exists $0 <\delta \leq 1$ such that
\begin{equation}
    D(\lambda^\star||\lambda^{(t)}) \leq (1-\delta)^{(t-N)}D(\lambda^\star||\lambda^{(N)})
\end{equation}
Li and Cai~\cite{li2019computing} show an analogous result for cq channels. Here we provide a different criterion that yields exponential convergence from the first iteration albeit with a stronger assumption. First, similar to \eqref{eq:contraction-coefficient}, we define the relative entropy expansion coefficient as
\begin{align}\label{eq:expansion-coefficient}
\eta_{\rm exp}(\cE)=\inf \frac{D\left(\cE(\rho)\middle\|\cE(\sigma)\right)}{D\left(\rho\middle\|\sigma\right)}\, \, \, \text{for $\sigma\gg\rho$ and $\sigma\neq\rho$}.
\end{align} 
Note that $\eta_{\rm exp}(\cE) \in [0, 1]$ by the nonnegativity of relative entropy and by the data processing inequality. Using~\eqref{eq:useful_form_bounding_cond_cq}, we see that when the coefficient $\eta_{\rm \exp}(\cE) > 0$, we get exponential convergence according to Proposition~\ref{prop:exp_convergence} with $\delta = \frac{\eta_{\rm exp}(\cE)}{\gamma}$. A consequence of the condition $\eta_{\rm \exp}(\cE) > 0$ is that $\forall \rho\neq\sigma, \ D(\cE(\rho)||\cE(\sigma)) \neq 0$ and hence $\cE(\rho)\neq\cE(\sigma)$ i.e. the map is injective.

 
\subsection{Simulation results}\label{sec:numerics-cq}

\begin{figure}[ht]
    \center
    \includegraphics[width=.45\textwidth]{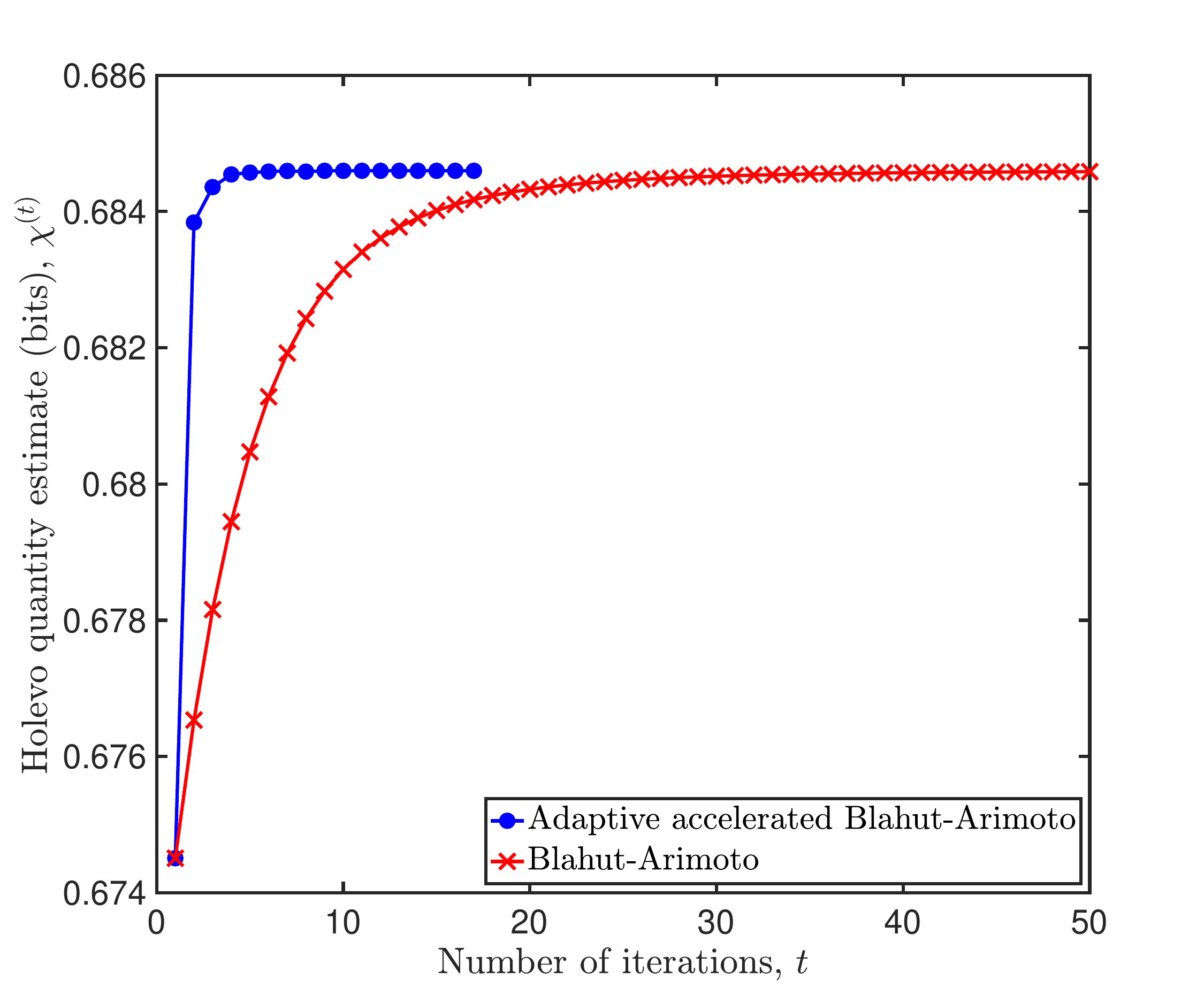}
    \includegraphics[width=.45\textwidth]{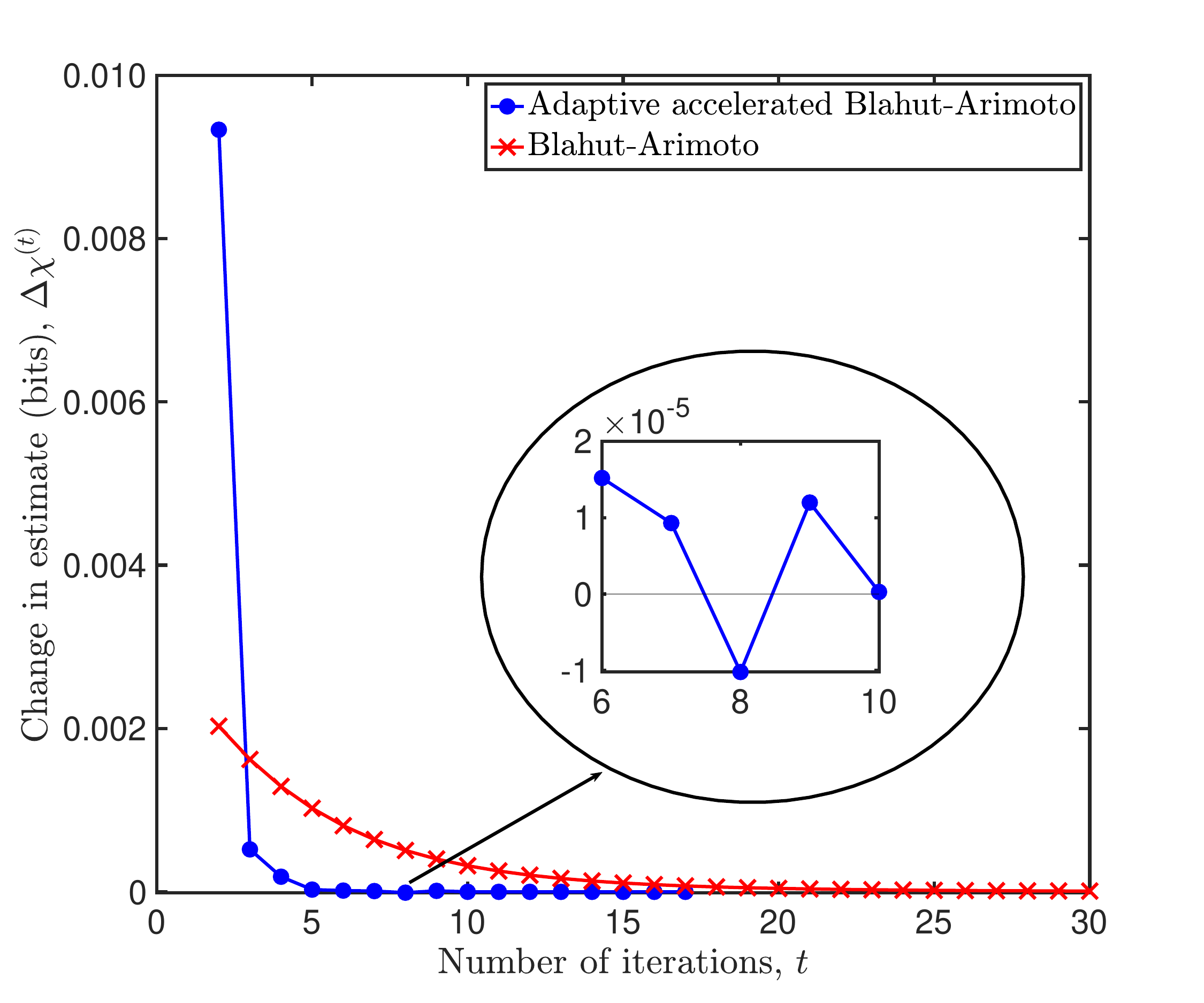}
    \caption{Convergence of the standard and adaptive accelerated Blahut-Arimoto algorithm to the Holevo quantity of a random cq channel with input alphabet of size $10$ and output dimension $16$. The left panel shows the lower bound on the capacity in each iteration step $t$ until the \textit{a posteriori} bound given in Proposition~\ref{prop:dynamic_termination} ensures that we terminate when $|\chi(\cE) - \chi^{(t)}| \leq10^{-6}$ bits. The adaptive accelerated algorithm with the acceleration parameter $\gamma^{(t)}$ from~\eqref{eq:adaptivegamma} converges in $17$ iterations, while the standard version converges after $248$ iterations (displayed up to the $50^{\rm th}$ iteration step). The right panel shows the improvement obtained in the capacity estimate in each iteration (displayed up to the $30^{\rm th}$ iteration step). The zoomed inset in the panel shows that the adaptive accelerated algorithm does not necessarily have a monotonically increasing capacity estimate.} 
    \label{fig:CQ_random_states}
\end{figure}

We numerically compute the classical capacity of a cq channel with the Holevo quantity $\chi(\mathcal{E})$ as given in~\eqref{cq_capacity}. We choose the ensemble of output density operators randomly using the \texttt{RandomDensityMatrix} package from QETLAB~\cite{qetlab}. Figure~\ref{fig:CQ_random_states} illustrates the results for a channel with input alphabet of size $10$ and output dimension $16$. The initial guess $\rho_\lambda^{(1)}$ is chosen to be the maximally mixed state. For several simple choices of channels, the uniform distribution is optimal and our algorithm requires only one iteration to compute the capacity (this is the reason for choosing a random channel for the illustration here). By Proposition~\ref{thm:convergence}, the capacity estimate $C(n)$ is $\varepsilon$-close to $C^\star$ after $\lceil\frac{\gamma\log|X|}{\varepsilon}\rceil$ iterations. Setting $\varepsilon=10^{-6}$ bits,  it would require around $10^6$ iterations for a provable convergence. To reduce the number of iterations, we use the termination criterion given in Proposition~\ref{prop:dynamic_termination}. Hence, if we observe that the maximal eigenvalue of $\HE(\rho^{(t)})$ satisfies $\alpha^{(t)}_{\max} - C(t)\leq\varepsilon$ in the $t$-th iteration step, we also have that $C^\star - C(t)\leq\varepsilon$. With this termination condition, the standard algorithm converges in $248$ iterations while the adaptive accelerated algorithm converges in $17$ iterations. 

We may also compare our numerics to the one given by Fawzi and Fawzi~\cite{Fawzi18}. Their approach to compute the capacity of this particular cq channel requires $2.21$ seconds using CVX and the Mosek solver~\cite{Lofberg2004}. Their method does not allow us to bound the error between the capacity estimate and the true capacity but the solver tolerance achieved is $1.3\times 10^{-6}$ bits. For the same channel, our standard Blahut-Arimoto algorithm takes $0.98$ seconds to achieve an \textit{a posteriori} error of $10^{-6}$ bits and the adaptive accelerated algorithm takes $0.09$ seconds.\footnote{All run-times correspond to a machine with a 2.3 GHz Intel Core i5 processor with 8GB of RAM.}


\section{Thermodynamic capacity of quantum channels}\label{sec:thermo}

\subsection{Definitions}

The thermodynamic capacity quantifies the information-theoretic power of quantum channels in the presence of physical restrictions imposed by thermodynamics~\cite{faist_thermodynamic_2018} (see also~\cite{Navascues15}). For a quantum channel $\cE_{A \rightarrow B}$, relative to operators $\Gamma_A,\Gamma_B > 0$ it can be written as
\begin{align}
T_\Gamma (\cE) &= \max_{\rho_A }  \underbrace{D(\cE(\rho_A)|| \Gamma_B) -D(\rho_A|| \Gamma_A)}_{=K(\rho_A, \cE)} \label{eq:K_rho_defn}\\
&=\max_{\rho_A } S(\rho_A) - S(\cE(\rho_A)) + \Tr{\rho_A \left\{ \log\Gamma_A - \cE^\dagger\left(\log\Gamma_B\right)\right\}} \, .
\end{align}
The $\Gamma$ operators are thereby typically given by Gibbs states generated by Hamiltonians  (i.e., some Hermitian operators) $H_A$ and $H_B$ on the input and output systems, respectively. That is, we have the choice $\Gamma_A=\expB{-\beta H_A}$ for a fixed inverse temperature parameter $\beta\geq0$ and similarly for the output system $B$. However, for the sake of our algorithm we leave $\Gamma_A,\Gamma_B > 0$ in general form and refer to~\cite{faist_thermodynamic_2018} for a discussion of the thermodynamic capacity and its properties. An interesting special case is $\Gamma_A=1_A$ and $\Gamma_B=1_B$, for which we get $T_1 (\cE)=-G(\cE)$, where the minimal entropy gain of the quantum channel is given by~\cite{Alicki04,Holevo11,Holevo2011}
\begin{align}
G(\cE)=\min_{\rho_A } S(\cE(\rho_A))-S(\rho_A)\,.
\end{align}
We refer to~\cite{Buscemi16} for a discussion of the minimal entropy gain for finite-dimensional quantum channels. The thermodynamic capacity is also related to the completely bounded minimal conditional entropy~\cite{Devetak06}
\begin{align}
S_{\text{CB},\min}(\cE)=\min_{\rho_A } S((\cE\otimes\mathcal{I})(\rho_{AR}))-S(\rho_A)
\end{align}
with $\rho_{AR}$ a purification of $\rho_A$. Namely, we have $S_{\text{CB},\min}(\cE)=-T_1(\cE^c)$ with $\cE^c_{A\to E}$ the complementary channel of $\cE_{A\to B}$. The completely bounded minimal conditional entropy plays a role in entanglement theory as shown in~\cite[Section 5]{Devetak06}.


\subsection{Blahut-Arimoto algorithm}

The thermodynamic capacity can be approximated using a Blahut-Arimoto algorithm of the structure given in Algorithm~\ref{algo:general_AB}. We define the following two variable extension of $K(\rho_A, \cE)$ for $\sigma \gg \rho$ and $\gamma \in [ 1 -\eta_{\rm exp}(\cE),1]$, where $K(\rho_A, \cE)$ is defined in \eqref{eq:K_rho_defn} and $\eta_{\rm exp}(\cE)$ is defined in \eqref{eq:expansion-coefficient}. 
\begin{align} \label{eq:J_gamma_thermo}
J_\gamma(\rho, \sigma, \cE) & = K(\rho, \cE) +(1-\gamma) D(\rho||\sigma) -D(\cE(\rho)||\cE(\sigma)) \, ,
\end{align}
where we omitted the system indices for simplicity. Note that $\gamma = 1$ corresponds to the standard algorithm and if it is known that $\eta_{\rm exp}(\cE) > 0$, then one may accelerate the algorithm. With a short calculation, one can bring $J_\gamma$ into the form
\begin{align}
J_\gamma(\rho, \sigma, \cE)
& =  - \gamma \, \Tr{\rho \log \rho} +  \Tr{\rho \left\{\gamma \, \log \sigma +  \HE(\sigma) \right\}} \textnormal{ with } \label{eq:J_gamma_thermo_form}\\
\HE(\sigma) &= -\logs \sigma + \cE^\dagger(\logs \cE(\sigma)) +\log\Gamma_A - \cE^\dagger(\log \Gamma_B) \, . \label{def:H_thermo}
\end{align}
Note that $\Tr{\rho\HE(\sigma) }$ is continuous for $\sigma \gg \rho$. Moreover, another short calculation leads to 
\begin{align}
&\Tr{\rho\left\{\HE(\sigma) - \HE(\rho)\right\}}
=D(\rho||\sigma)-D(\cE(\rho)||\cE(\sigma)) \textnormal{ and hence } \label{eq:useful_form_bounding_cond_thermo}\\
&0 \leq \Tr{\rho\left\{\HE(\sigma) - \HE(\rho)\right\}} \leq \gamma D(\rho||\sigma) \, ,  \label{eq:requirement_thermo}
\end{align}
where we used the data processing inequality for the quantum relative entropy (together with $\gamma \geq 1 - \eta_{\rm exp}(\cE)$). 
From Lemma~\ref{lem:update_rules} whose requirement is satisfied by~\eqref{eq:requirement_thermo}, we find a double optimization form of the thermodynamic capacity $T_\Gamma (\cE)$
\begin{align}
\max_{\rho,\sigma \textnormal{ with } \sigma \gg \rho} J_\gamma(\rho, \sigma, \cE)&=\max_{\rho} J_\gamma(\rho, \rho, \cE)=\max_{\rho} K(\rho, \cE) =T_\Gamma (\cE)\,.
\end{align}

Performing the two maximizations in $ \max_{\rho, \sigma} J_\gamma(\rho, \sigma, \cE)$ iteratively, leads to the following algorithm  (see Lemma~\ref{lem:update_rules} for the form of the update rules), which provably converges  to the thermodynamic capacity by Theorem~\ref{thm:convergence}. A detailed analysis of the time complexity can be found in Appendix~\ref{app:time_compl_thermo}. 


\begin{algorithm}[H] 
  \caption{Accelerated Blahut-Arimoto type algorithm for the thermodynamic capacity}
\label{algo:AB_thermo_acc}
   \begin{algorithmic}[1]
    \State Inputs: Quantum channel $\cE_{A\rightarrow B}$ and its adjoint $\cE_{B\rightarrow A}^\dagger$ (both given as lookup tables whose $(i,j)$-th entry is given by $\cE(\ket{i}\bra{j})$ or $\cE^\dagger(\ket{i}\bra{j})$, respectively), acceleration coefficient $\gamma \in [1 -\eta_{\rm exp}(\cE),1]$  and additive error $\varepsilon>0$
    \State Choose $\rho^{(1)}=\frac{1_A}{|A|}$ 
     \For{$t \in \{1,2,\dots, n=\lceil\gamma \log |A|/\varepsilon \rceil\}$}
        \State $\rho^{(t+1)} = \frac{1}{Z^{(t+1)}} \expB{\log\rho^{(t)} +\frac1\gamma \HE(\rho^{(t)})} \, ,$ where 
$\HE(\sigma) = -\log\sigma + \cE^\dagger(\logs \cE(\sigma)) +\log\Gamma_A - \cE^\dagger\left(\log\Gamma_B\right)$ and $Z^{(t+1)}=\Tr{\expB{\log\rho^{(t)} +\frac1\gamma \HE(\rho^{(t)}))}}$ normalizes the state.
      \EndFor
   \State Outputs: $\rho^{(n+1)}$, $T_{\Gamma}^{(n)}=J_\gamma(\rho^{(n+1)},\rho^{(n)},\cE)$ with $|T_\Gamma(\cE)-T_{\Gamma}^{(n)}| \leq \varepsilon$
   \end{algorithmic}
\end{algorithm}


When the coefficient $\eta_{\rm con}(\cE) < 1$ (from \eqref{eq:contraction-coefficient}), we get exponential convergence. To see this, note
\begin{align}
    1-\eta_{\rm con}(\cE) &= 1 - \sup_{\rho\neq\sigma} \frac{D(\cE(\rho)||\cE(\sigma))}{D(\rho||\sigma)} \\
    &= \inf_{\rho\neq\sigma} \frac{D(\rho||\sigma)-D(\cE(\rho)||\cE(\sigma))}{D(\rho||\sigma)} \, .
\end{align}
Using~\eqref{eq:useful_form_bounding_cond_thermo}, we see that we have exponential convergence according to Proposition~\ref{prop:exp_convergence} with $\delta = \frac{1 - \eta_{\rm con}(\cE)}{\gamma}$ when the coefficient $\eta_{\rm con}(\cE) < 1$.

\subsection{Simulation results}\label{sec:numerics-thermo}

\begin{figure}[ht]
    \centering
    \includegraphics[width=.45\textwidth]{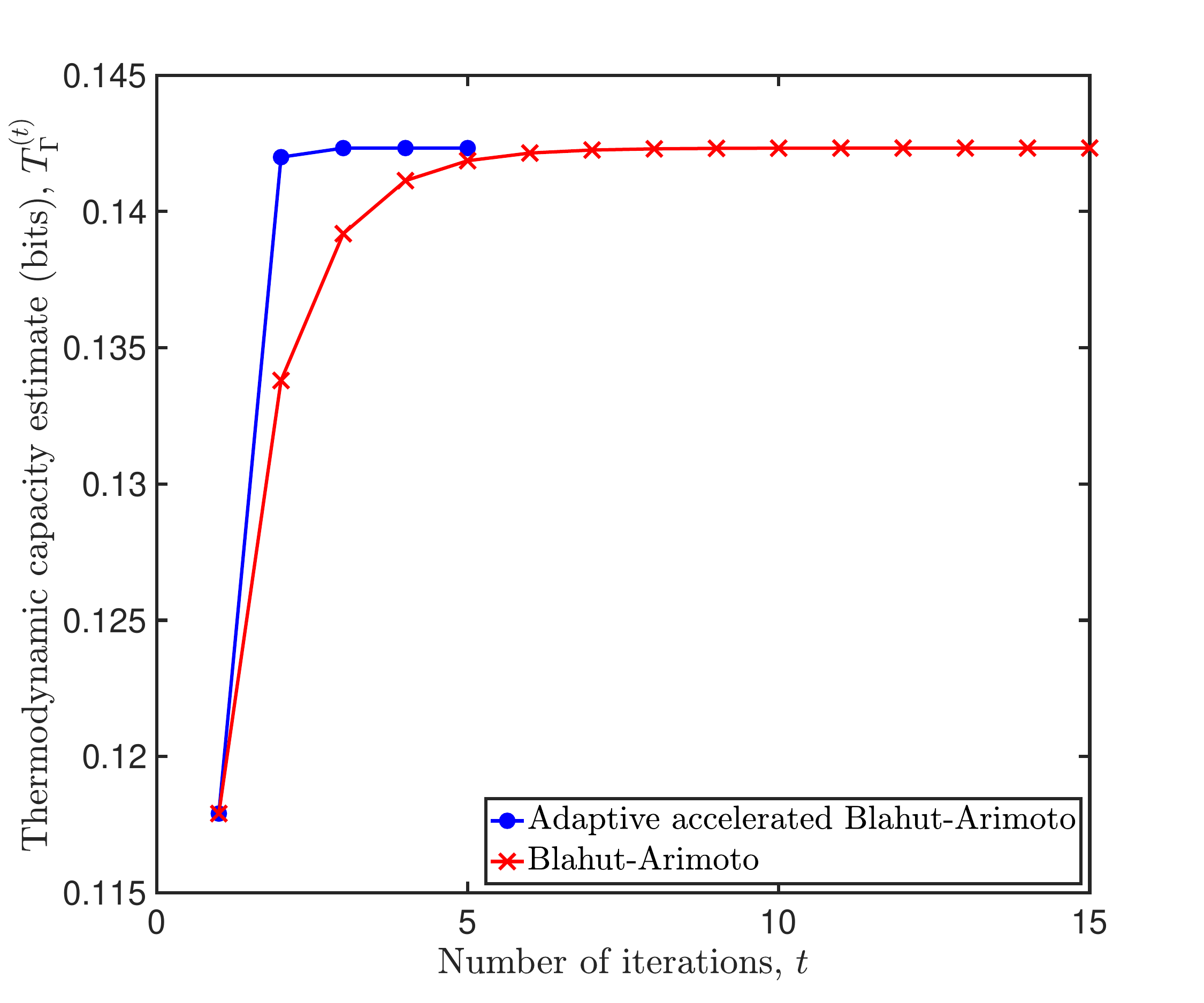}
    \includegraphics[width=.45\textwidth]{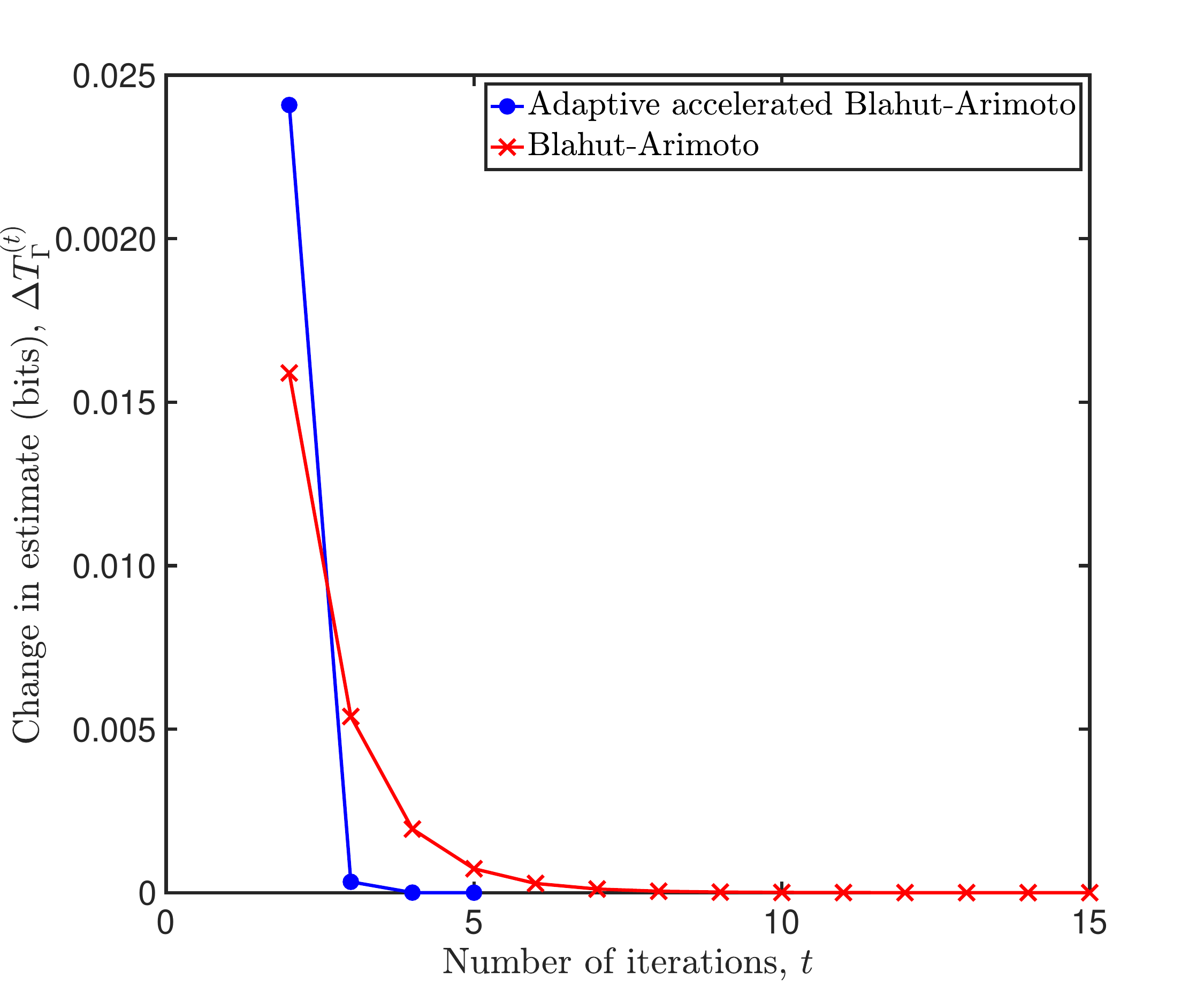}
    \caption{Convergence of the Blahut-Arimoto algorithm to the thermodynamic capacity of the amplitude damping channel $\mathcal{E}^{AD}_{0.3}$ (given in~\ref{eq:amplitude_damping_channel}) in the standard and adaptive accelerated case with the adaptive acceleration parameter $\gamma^{(t)}$ in the $t$-th iteration step (where $\gamma^{(t)}$ is given in~\eqref{eq:adaptivegamma}). The left panel shows the lower bound on the thermodynamic capacity in each iteration step $t$ until the \textit{a posteriori} bound given in Proposition~\ref{prop:dynamic_termination} ensures that we terminate when $|T_{\Gamma}(\cE) - T_{\Gamma}^{(t)}| \leq10^{-6}$ bits. The adaptive accelerated Blahut-Arimoto algorithm converges after $5$ iterations, while the standard algorithm converges after $25$ iterations (displayed up to 15 iterations). The right panel shows the improvement obtained in the capacity estimate with each iteration.}
    \label{fig:TC_amplitude_damping}
\end{figure}

For unital channels, the thermodynamic capacity is zero and the maximizer is the maximally mixed state, which we have chosen as our initial guess $\rho_\lambda^{(1)}$. Here, we consider the non-unital qubit amplitude damping channel which we recall from \eqref{eq:amplitude_damping_channel_intro}
\begin{align}\label{eq:amplitude_damping_channel}
\text{$\mathcal{E}^{AD}_{p}(\rho) =A_{0} \rho A_{0}^{\dagger}+A_{1} \rho A_{1}^{\dagger}$ with $A_{0} = | 0 \rangle\langle 0|+\sqrt{1-p}| 1\rangle\langle 1\vert, A_{1} = \sqrt{p}\vert 0\rangle\langle 1\vert$ for $p \in[0,1]$.}
\end{align}
As in Section~\ref{sec:numerics-cq}, we choose an additive error $\varepsilon=10^{-6}$ bits. We use the termination criterion given in Proposition~\ref{prop:dynamic_termination}, which significantly reduces the required number of iterations of the algorithm. The results are shown in Figure~\ref{fig:TC_amplitude_damping}.
The standard and the  accelerated algorithm converge in $25$ and $5$ iterations, respectively.


\section{Coherent information of less noisy channels}\label{sec:ci}

\subsection{Definitions}
For a quantum channel $\cE$, the coherent information of the channel, $I_{\rm{coh}}(\cE)$, is defined as the maximum of the coherent information $I_{\rm{coh}}(\rho, \cE) = S(\cE(\rho)) - S(\cE_c(\rho))$, where the maximization is taken over input states $\rho$. That is, we have
\begin{align} \label{eq:defintion_ci}
I_{\rm{coh}}(\cE) = \max_{\rho} \underbrace{S(\cE(\rho)) - S(\cE_c(\rho))}_{=I_{\rm{coh}}(\rho, \cE)} \, .
\end{align}
A channel $\cE$ is called \textit{less noisy} if the private capacity of its complementary channel $\cE_c$ is zero~\cite{watanabe2012private}. In~\cite[Proposition 4]{watanabe2012private}, it is shown that a channel is less noisy if and only if it is less divergence contracting, which implies that for all density operators $\rho$ and $\sigma$, we have
\begin{align}\label{eq:less_noisy_condition}
D(\cE(\rho)||\cE(\sigma)) \geq D(\cE_c(\rho)||\cE_c(\sigma))\,.
\end{align}
The quantum capacity $Q(\cE)$ of the channel $\cE$ is the regularized coherent information of a channel~\cite{devetak2005private} and corresponds to the rate at which Alice can transmit quantum information to Bob. In~\cite{watanabe2012private} it is shown that the coherent information of a channel is additive if the channel is less noisy. Hence, for less noisy channels, we have $Q(\cE) = I_{\rm{coh}}(\cE)$. We shall only consider less noisy channels henceforth and we therefore drop the distinction between the regularized and single letter formulas. 


\subsection{Blahut-Arimoto algorithm}

The coherent information of a quantum channel can be approximated using a Blahut-Arimoto algorithm of the structure given in Algorithm~\ref{algo:general_AB}. We may also accelerate the algorithm and to do so, we introduce the following coefficient.
\begin{align}\label{eq:noisy_contraction_coeff}
    \zeta_{\rm con}(\cE) &= \sup_{\rho\neq\sigma}\frac{D(\cE(\rho)||\cE(\sigma)) - D(\cE_c(\rho)||\cE_c(\sigma))}{D(\rho||\sigma)}\,.
\end{align}

We define the following two variable extension of $I_{\rm{coh}}(\rho, \cE)$ for $\sigma \gg \rho$ and $\gamma \in [\zeta_{\rm con}, 1]$.
\begin{align} \label{eq:J_gamma_ci}
J_\gamma(\rho, \sigma, \cE) & = I_{\rm{coh}}(\rho, \cE) + D(\cE(\rho)||\cE(\sigma)) - D(\cE_c(\rho)||\cE_c(\sigma)) - \gamma D(\rho||\sigma)\, .
\end{align}

With a short calculation, one can bring $J_\gamma$ into the following form
\begin{align}
J_\gamma(\rho, \sigma, \cE) 
& =  - \gamma \, \Tr{\rho \log \rho} +  \Tr{\rho \left\{\gamma \, \log \sigma +  \HE(\sigma) \right\}}\label{eq:J_gamma_ci_form}\textnormal{ with}\\
\HE(\sigma) &= \cE_c^\dagger\log\cE_c(\sigma) - \cE^\dagger\log\cE(\sigma)\, . \label{def:H_ci}
\end{align}
Note that $\gamma = 1$ corresponds to the standard algorithm and if $\zeta_{\rm con}(\cE) < 1$, then one may accelerate the algorithm. We also note that $\Tr{\rho\HE(\sigma) }$ is continuous for $\sigma \gg \rho$. Moreover, another short calculation leads to 
\begin{align}
&\Tr{\rho\left\{\HE(\sigma) - \HE(\rho)\right\}}
= D(\cE(\rho)||\cE(\sigma)) - D(\cE_c(\rho)||\cE_c(\sigma))\textnormal{ and hence }\label{eq:useful_form_bounding_cond_ci} \\
&0 \leq \Tr{\rho\left\{\HE(\sigma) - \HE(\rho)\right\}} \leq \gamma D(\rho||\sigma) \, , \label{eq:requirement_ci}
\end{align}
where we have used \eqref{eq:less_noisy_condition} along with $\gamma\geq \zeta_{\rm con}(\cE)$. From Lemma~\ref{lem:update_rules}, whose requirement is satisfied by~\eqref{eq:requirement_ci}, we find a double optimization form of the coherent information $I_{\rm{coh}}(\cE)$
\begin{align}
\max_{\rho,\sigma \textnormal{ with } \sigma \gg \rho} J_\gamma(\rho, \sigma, \cE)&=\max_{\rho} J_\gamma(\rho, \rho, \cE)=\max_{\rho} I_{\rm{coh}}(\rho, \cE) = I_{\rm{coh}}(\cE)\,.
\end{align}

Performing the two maximizations in $ \max_{\rho, \sigma} J_\gamma(\rho, \sigma, \cE)$ iteratively, leads to the following algorithm (see Lemma~\ref{lem:update_rules} for the form of the update rules), which provably converges  to  the coherent information by Theorem~\ref{thm:convergence}.   A detailed analysis of the time complexity is given in Appendix~\ref{app:time_compl_ci}. 


\begin{algorithm}[H] 
  \caption{Accelerated Blahut-Arimoto type algorithm for the coherent information}
\label{algo:AB_ci_acc}
   \begin{algorithmic}[1]
    \State Inputs: Quantum channel $\cE_{A\rightarrow B}$, its complementary channel $\cE_{c}$ and the respective adjoint channels $\cE_{B\rightarrow A}^\dagger$ and $\cE_c^\dagger$ (all given as lookup tables whose $(i,j)$-th entry is given by the action of the channel on $\ket{i}\bra{j}$), acceleration coefficient $\gamma\in [\zeta_{\rm con}, 1]$  and additive error $\varepsilon>0$
    \State Choose $\rho^{(1)}=\frac{1_A}{|A|}$ 
     \For{$t \in \{1,2,\dots, n=\lceil\gamma \log |A|/\varepsilon \rceil\}$}
        \State $\rho^{(t+1)} = \frac{1}{Z^{(t+1)}} \expB{\log\rho^{(t)} +\frac1\gamma \HE(\rho^{(t)})} \, ,$ where 
$\HE(\sigma) = \cE_c^\dagger\log\cE_c(\sigma) - \cE^\dagger\log\cE(\sigma)$ and 

$Z^{(t+1)}=\Tr{\expB{\log\rho^{(t)} +\frac1\gamma \HE(\rho^{(t)}))}}$ normalizes the state.
      \EndFor
   \State Outputs: $\rho^{(n+1)}$, $I_{\rm{coh}}^{(n)}=J_\gamma(\rho^{(n+1)},\rho^{(n)},\cE)$ with $|I_{\rm{coh}}(\cE)-I_{\rm{coh}}^{(n)}| \leq \varepsilon$
   \end{algorithmic}
\end{algorithm}

We also have exponential convergence if the channel is strictly less noisy i.e. $\inf D(\cE(\rho)||\cE(\sigma)) - D(\cE_c(\rho)||\cE_c(\sigma)) > 0$ for all $\sigma \gg \rho$ and $\sigma\neq\rho$. To see this, we define
\begin{align}
\label{eq:noisy_expansion_coefficient}
    \zeta_{\rm exp}(\cE) &= \inf \frac{D(\cE(\rho)||\cE(\sigma)) - D(\cE_c(\rho)||\cE_c(\sigma))}{D(\rho||\sigma)} \, \, \, \text{for $\sigma\gg\rho$ and $\sigma\neq\rho$.}
\end{align}
In general, $\zeta_{\rm exp}(\cE) \in [-1, 1]$ but for less noisy channels, we have that $\zeta_{\rm exp}(\cE) \in [0, 1]$ due to~\eqref{eq:less_noisy_condition}. Comparing it to~\eqref{eq:noisy_contraction_coeff}, we also note that $\zeta_{\rm exp}(\cE) = -\zeta_{\rm con}(\cE_c)$. Using~\eqref{eq:useful_form_bounding_cond_ci}, we see that when the coefficient $\zeta_{\rm exp}(\cE) > 0$, we get exponential convergence according to Proposition~\ref{prop:exp_convergence} with $\delta = \frac{\zeta_{\rm exp}(\cE)}{\gamma}$. Indeed, if the channel is strictly less noisy, the condition $\zeta_{\rm exp}(\cE)> 0$ follows. 


\subsection{Simulation results}\label{sec:numerics-ci}

\begin{figure}[ht]
    \centering
    \includegraphics[width=.45\textwidth]{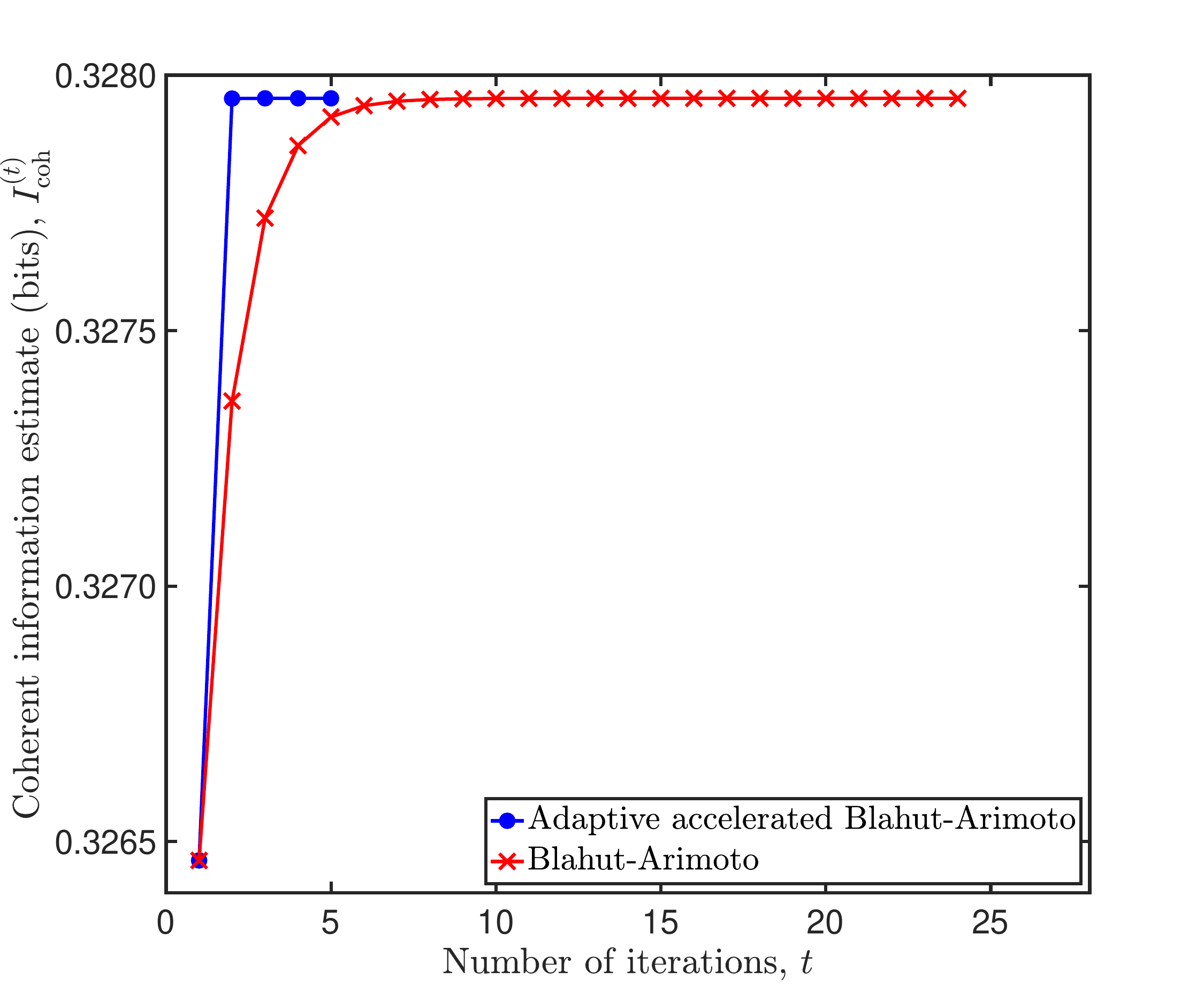}
    \includegraphics[width=.45\textwidth]{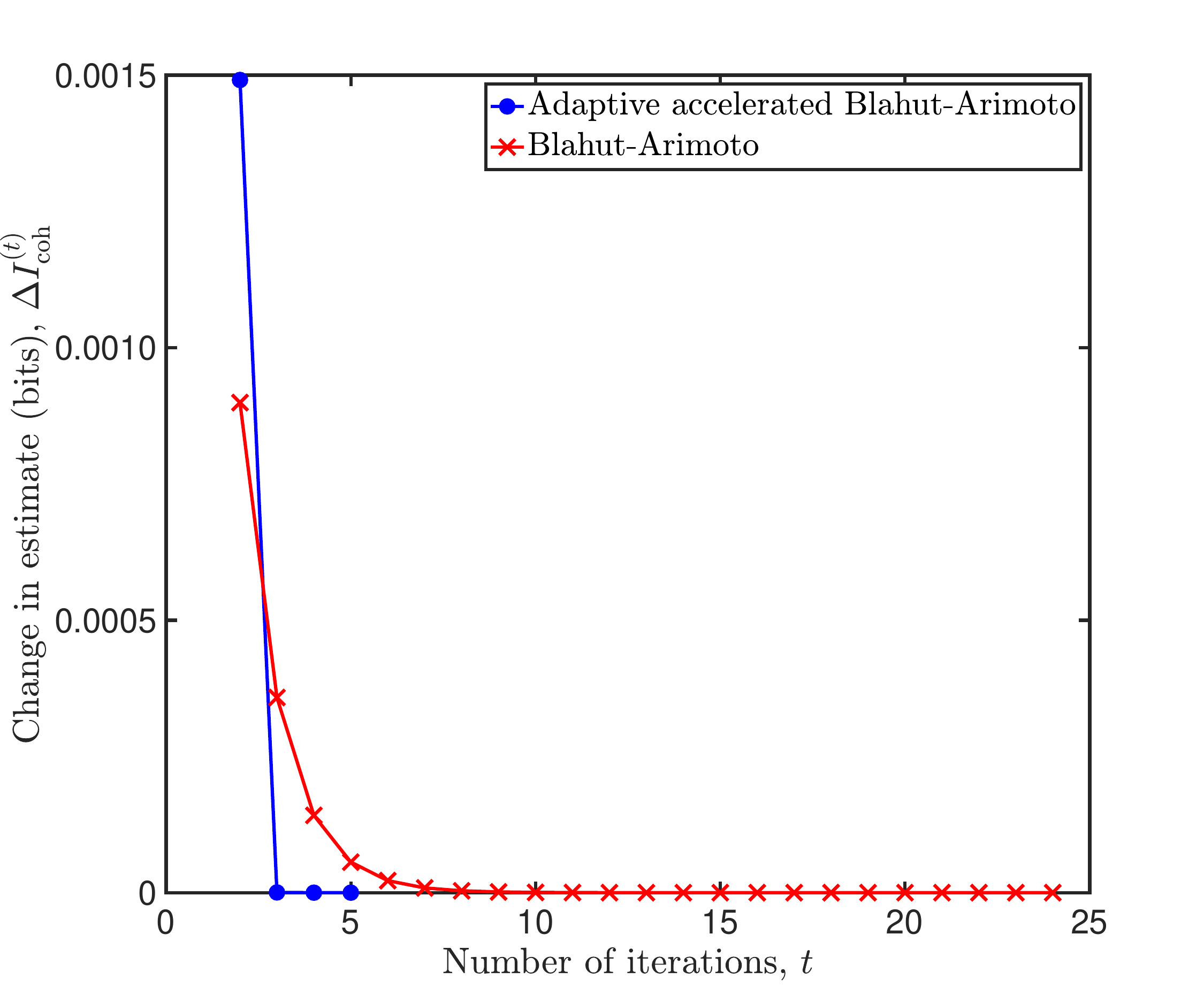}
    \caption{Convergence of the Blahut-Arimoto algorithm to the coherent information of the amplitude damping channel $\mathcal{E}^{AD}_{0.3}$ given in~\eqref{eq:amplitude_damping_channel} in the standard and adaptive accelerated case with acceleration parameter $\gamma^{(t)}$ in the $t$-th iteration step as defined in \eqref{eq:adaptivegamma}. The left panel shows the lower bound on the coherent information in each iteration step $t$ until the \textit{a posteriori} bound given in Proposition~\ref{prop:dynamic_termination} ensures that we terminate when $|I_{\rm coh}(\cE) - I_{\rm coh}^{(t)}| \leq10^{-6}$ bits. The adaptive accelerated Blahut-Arimoto algorithm converges after $5$ iterations, while the standard algorithm converges after $24$ iterations. The right panel shows the improvement obtained in the coherent information estimate with each iteration.}
    \label{fig:CI_amplitude_damping}
\end{figure}
Once again we consider the qubit amplitude damping channel whose form is given in~\eqref{eq:amplitude_damping_channel}. We choose an additive error $\varepsilon=10^{-6}$ bits and use the termination criterion given in Proposition~\ref{prop:dynamic_termination} which significantly reduces the number of iterations of the algorithm required to achieve an estimate of the capacity with additive error smaller than $\varepsilon$. The results are shown in Figure~\ref{fig:CI_amplitude_damping}. The standard algorithm and the  adaptive accelerated algorithm converge in $24$  and $5$ iterations, respectively.


\section{Mutual information of quantum channels}\label{sec:mi}

\subsection{Definitions}

The mutual information $I(\cE)$ of a quantum channel $\cE$ is defined as the maximum over input states $\rho$ of the mutual information $I(\rho, \cE) = S(\rho) + S(\cE(\rho)) - S(\cE_c(\rho))$, i.e.,
\begin{align} \label{eq:definition_mi}
I(\cE) = \max_{\rho} \underbrace{S(\rho) + S(\cE(\rho)) - S(\cE_c(\rho))}_{=I(\rho, \cE)} \, ,
\end{align}

The entanglement assisted classical capacity $C(\cE)$ of the channel $\cE$ is the regularized mutual information of a channel~\cite{Bennett02}. It is also known that the mutual information of a channel is additive~\cite{wildebook17} and hence we have $C(\cE) = I(\cE)$. 


\subsection{Blahut-Arimoto algorithm}

The mutual information of a quantum channel can be approximated using a Blahut-Arimoto algorithm of the structure given in Algorithm~\ref{algo:general_AB}. First, recall the definition of $\zeta_{\rm con}$ given in~\eqref{eq:noisy_contraction_coeff}. For any $\gamma \in [1 + \zeta_{\rm con}(\cE),2]$,
we define the following two variable extension of $I(\rho, \cE)$ for $\sigma \gg \rho$ 
\begin{align} \label{eq:J_gamma_mi}
J_\gamma(\rho, \sigma, \cE) & = I(\rho, \cE) + D(\cE(\rho)||\cE(\sigma)) - D(\cE_c(\rho)||\cE_c(\sigma)) + (1-\gamma) D(\rho||\sigma)\, .
\end{align}
Note that $\gamma = 2$ corresponds to the standard algorithm. If it is known that $\zeta_{\rm con}(\cE) < 1$, then one may accelerate the algorithm. With a short calculation, one can bring $J_\gamma$ into the form
\begin{align}
J_\gamma(\rho, \sigma, \cE) 
& =  - \gamma \, \Tr{\rho \log \rho} +  \Tr{\rho \left\{\gamma \, \log \sigma +  \HE(\sigma) \right\}} \textnormal{ with }\label{eq:J_gamma_mi_form} \\
\HE(\sigma) &= -\log(\sigma) + \cE_c^\dagger\log\cE_c(\sigma) - \cE^\dagger\log\cE(\sigma)\, . \label{def:H_mi}
\end{align}
Note that $\Tr{\rho\HE(\sigma) }$ is continuous for $\sigma \gg \rho$. Moreover, a short calculation leads to 
\begin{align}
\label{eq:useful_form_bounding_cond_mi}
&\Tr{\rho\left\{\HE(\sigma) - \HE(\rho)\right\}}
= D(\rho||\sigma) + D(\cE(\rho)||\cE(\sigma)) - D(\cE_c(\rho)||\cE_c(\sigma)) \, \textnormal{ and hence }\\
&0 \leq \Tr{\rho\left\{\HE(\sigma) - \HE(\rho)\right\}} \leq \gamma D(\rho||\sigma) \, ,  \label{eq:requirement_mi}
\end{align}

where we have used the data processing inequality and the nonnegativity of relative entropy along with $\gamma\geq 1+ \zeta_{\rm con}(\cE)$. From Lemma~\ref{lem:update_rules}, whose requirement is satisfied by~\eqref{eq:requirement_mi}, we find a double optimization form of the coherent information $I(\cE)$
\begin{align}
\max_{\rho,\sigma \textnormal{ with } \sigma \gg \rho} J_\gamma(\rho, \sigma, \cE)&=\max_{\rho} J_\gamma(\rho, \rho, \cE)=\max_{\rho} I(\rho, \cE)  = I(\cE)\,.
\end{align}

Performing the two maximizations in $ \max_{\rho, \sigma} J_\gamma(\rho, \sigma, \cE)$ iteratively, leads to the following algorithm (see Lemma~\ref{lem:update_rules} for the form of the update rules), which provably converges  to the mutual information of the channel $\cE$ by Theorem~\ref{thm:convergence}.  A detailed analysis of the time complexity is given in Appendix~\ref{app:time_compl_mi}.


\begin{algorithm}[H] 
  \caption{Accelerated Blahut-Arimoto type algorithm for the mutual information}
\label{algo:AB_mi_acc}
   \begin{algorithmic}[1]
    \State Inputs: Quantum channel $\cE_{A\rightarrow B}$, its complementary channel $\cE_{c}$ and the respective adjoint channels $\cE_{B\rightarrow A}^\dagger$ and $\cE_c^\dagger$ (all given as lookup tables whose $(i,j)$-th entry is given by the action of the channel on $\ket{i}\bra{j}$), acceleration coefficient $\gamma\in[1 +\zeta_{\rm con}(\cE), 2]$  and additive error $\varepsilon>0$
    \State Choose $\rho^{(1)}=\frac{1_A}{|A|}$ 
     \For{$t \in \{1,2,\dots, n=\lceil\gamma \log |A|/\varepsilon \rceil\}$}
        \State $\rho^{(t+1)} = \frac{1}{Z^{(t+1)}} \expB{\log\rho^{(t)} +\frac1\gamma \HE(\rho^{(t)})} \, ,$ where 
$\HE(\sigma) = -\log(\rho) + \cE_c^\dagger\log\cE_c(\sigma) - \cE^\dagger\log\cE(\sigma)$ and 

$Z^{(t+1)}=\Tr{\expB{\log\rho^{(t)} +\frac1\gamma \HE(\rho^{(t)}))}}$ normalizes the state.
      \EndFor
   \State Outputs: $\rho^{(n+1)}$, $I^{(n)}=J_\gamma(\rho^{(n+1)},\rho^{(n)},\cE)$ with $|I(\cE)-I^{(n)}| \leq \varepsilon$
   \end{algorithmic}
\end{algorithm}

We also have exponential convergence under certain conditions. Recall the definition $\zeta_{\rm exp}(\cE)$ from~\eqref{eq:noisy_expansion_coefficient}. Using~\eqref{eq:useful_form_bounding_cond_mi}, we see that when the coefficient $\zeta_{\rm exp}(\cE) > -1$ (or equivalently, $\zeta_{\rm con}(\cE_c) < 1$), we get exponential convergence according to Proposition~\ref{prop:exp_convergence} with $\delta =\frac{1 + \zeta_{\rm exp}(\cE)}{\gamma}$. 


\subsection{Simulation results}\label{sec:numerics-mi}

\begin{figure}[ht]
    \centering
    \includegraphics[width=.45\textwidth]{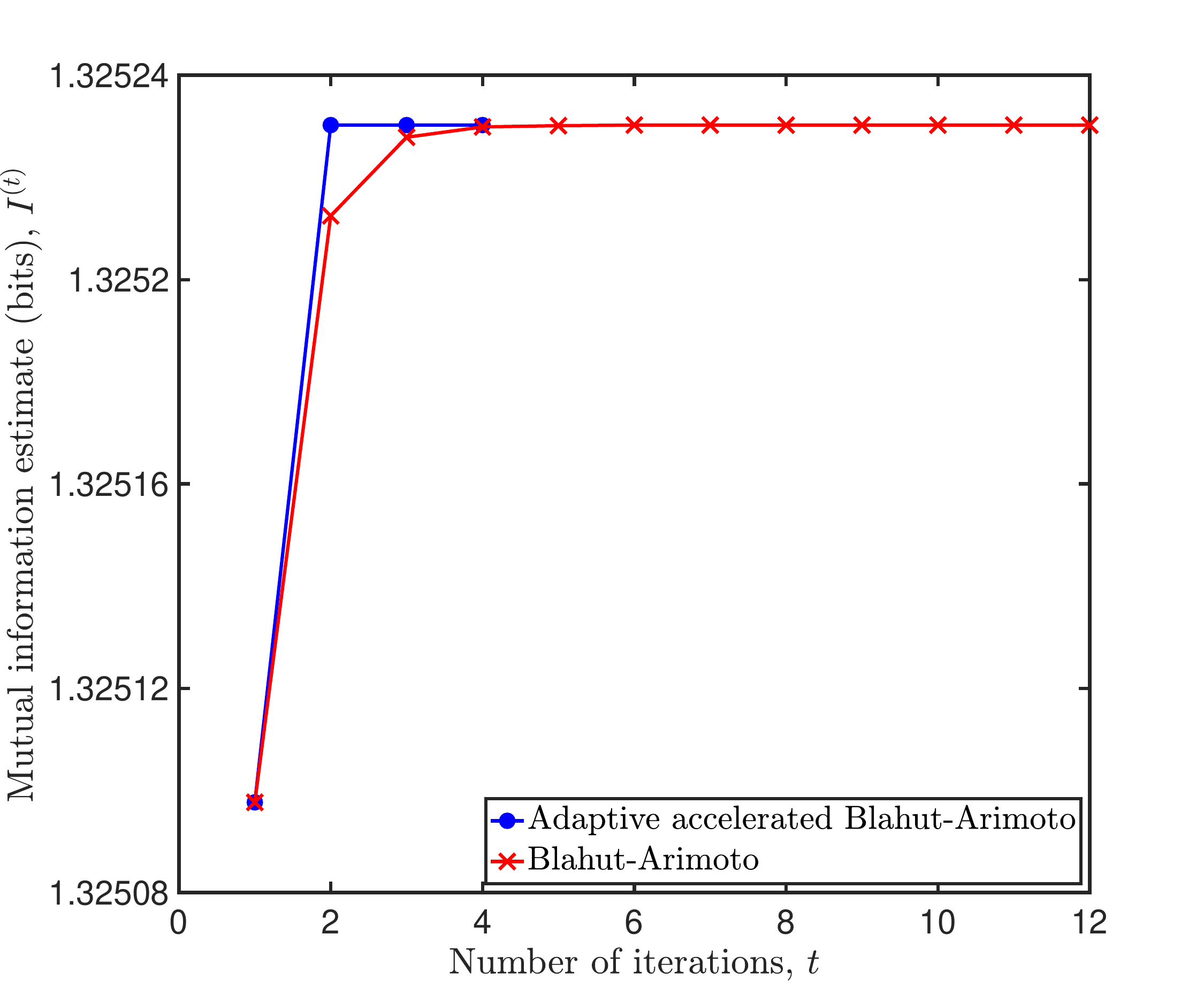}
    \includegraphics[width=.45\textwidth]{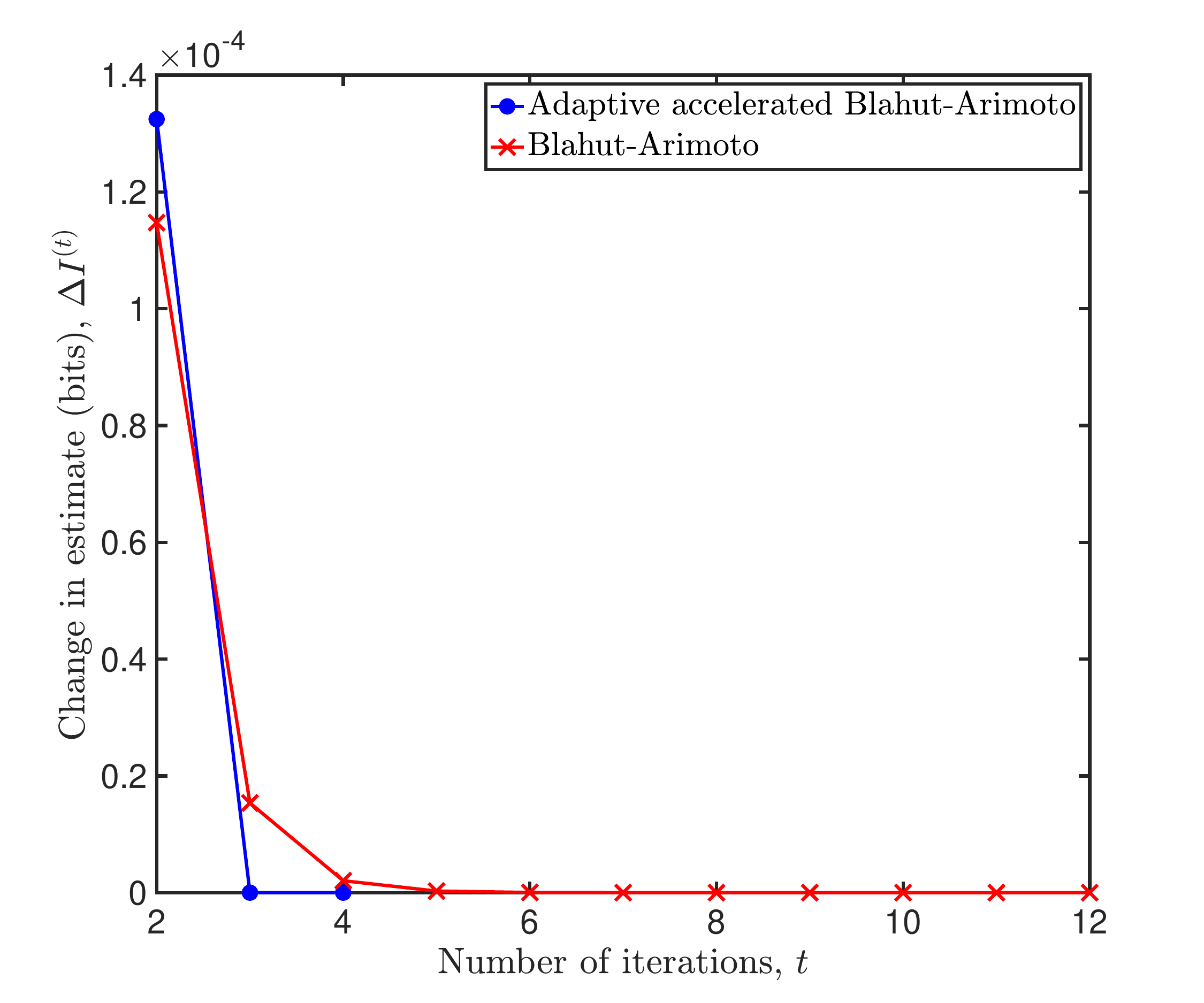}
    \caption{Convergence of the Blahut-Arimoto algorithm to the mutual information of the amplitude damping channel $\mathcal{E}^{AD}_{0.3}$ given in~\eqref{eq:amplitude_damping_channel} in the standard and adaptive accelerated case with acceleration parameter $\gamma^{(t)}$ as defined in \eqref{eq:adaptivegamma}. The left panel shows the lower bound on the mutual information in each iteration step $t$ until the \textit{a posteriori} bound given in Proposition~\ref{prop:dynamic_termination} ensures that we terminate when $|I(\cE) - I^{(t)}| \leq10^{-6}$ bits. The adaptive accelerated Blahut-Arimoto algorithm converges after $4$ iterations, while the standard algorithm converges after $12$ iterations. The right panel shows the improvement obtained in the mutual information estimate with each iteration.}
    \label{fig:MI_amplitude_damping}
\end{figure}

Once again we consider the qubit amplitude damping channel which has the form as in (\ref{eq:amplitude_damping_channel}). We choose an additive error $\varepsilon=10^{-6}$ bits and use the termination criterion given in Proposition~\ref{prop:dynamic_termination}, which significantly reduces the number of iterations of the algorithm required to achieve an estimate of the capacity with additive error smaller than $\varepsilon$. The results are shown in Figure~\ref{fig:MI_amplitude_damping}. The standard and the adaptive accelerated algorithm converge to the capacity in $12$ and $4$ iterations, respectively.


\section{Conclusion}\label{sec:conclusion}

We give an analytical and numerical analysis of alternating optimization algorithms of Blahut-Arimoto type for computing various channel capacities in quantum information theory. We note that our algorithms are of zeroth-order and do not need to take into account matrix valued derivatives. Thus, they are rather straightforward and computationally inexpensive to implement\,---\,as demonstrated in our numerical examples. It remains open if alternating optimization algorithms, in particular of Blahut-Arimoto type, can also be given for other convex optimization problems in terms of quantum entropy. Starting from the analogous classical settings this could, e.g., be quantum rate distortion functions~\cite{Datta13}, quantum information bottleneck functions~\cite{hirche18}, or certain quantum network capacities~\cite{Winter01}. For example, in the case of the classical information bottleneck, a Blahut-Arimoto type algorithm is used for the estimation of the information bottleneck functions~\cite{Tishby99} and helped to gain insight into the black box of deep neural networks~\cite{tishby_deep_2015,shwartz-ziv_opening_2017}. The quantum information bottleneck was investigated in~\cite{Watanabe16,salek_quantum_2019,hirche18}.

In contrast to classical Shannon theory, in quantum Shannon theory exact capacity formulas are often not known~\cite{wildebook17}. Moreover, known upper and lower bounds are not always in the form of convex optimization problems. For example, the complexity of determining the classical capacity of general entanglement breaking channels is NP-complete~\cite{Beigi07} (see also~\cite{Harrow13} for hardness of approximation results). For such cases, one might rather aim for numerical tools that do well for the average case in practically relevant examples. Alternating optimization algorithms offer an interesting option in this direction. In fact, Nagaoka explored a version of his quantum Blahut-Arimoto algorithm to study the classical capacity of general quantum channels~\cite{Nagaoka01}.\newline


\paragraph*{Acknowledgements.} We thank Joseph Renes and Marco Tomamichel for discussions. RI acknowledges support from the Swiss National Science Foundation through SNSF project No.\ 200020-165843 and through the National Centre of Competence in Research \textit{Quantum Science and Technology} (QSIT).

\paragraph*{Note added.} During finalization of the first version of our work we became aware of the related works~\cite{Cai19, li2019computing} by Li and Cai. 

\appendix

\section{Gibbs' variational principle}
The well-known Gibbs' variational principle is given in the following lemma (which is used to show the form of the optimizer in Lemma~\ref{lem:update_rules}).

\begin{lemma}[Gibbs' principle]\label{lem:Gibbs}
Let $\omega$ a density operator and $H$ a Hermitian matrix on the same space. Then, we have
\begin{align}
\text{$\Tr{\omega H} -\Tr{ \omega \log \omega} \leq \log \Tr{ \expB{H} }$ }
\end{align}
with equality if and only if $\omega=\frac{\expB{H}}{\Tr { \expB{H}}}$.
\end{lemma}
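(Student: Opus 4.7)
The plan is to reduce the statement to the non-negativity of the quantum relative entropy (Klein's inequality), which is the standard route for establishing Gibbs' variational principle.

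First, I would introduce the Gibbs state $\sigma := \expB{H}/Z$ where $Z := \Tr{\expB{H}}$. Since $H$ is Hermitian, $\expB{H}$ is strictly positive definite, so $\sigma$ is a well-defined, full-rank density operator and in particular $\sigma \gg \omega$ for any $\omega$. Taking the logarithm (which commutes with $\expB{\cdot}$ because the exponential and logarithm of the same Hermitian matrix share eigenbasis) yields the key identity
\begin{align}
\log \sigma = H - (\log Z)\,\id\,.
\end{align}

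Next I would compute the relative entropy $D(\omega\|\sigma)$ using this identity:
\begin{align}
D(\omega\|\sigma) &= \Tr{\omega \log \omega} - \Tr{\omega \log \sigma} \\
&= \Tr{\omega \log \omega} - \Tr{\omega H} + \log Z\,,
\end{align}
where I used that $\Tr{\omega}=1$. Rearranging gives
\begin{align}
\Tr{\omega H} - \Tr{\omega \log \omega} = \log Z - D(\omega\|\sigma)\,.
\end{align}

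The final step is to invoke Klein's inequality $D(\omega\|\sigma)\geq 0$, which is a standard consequence of the operator concavity of the logarithm (or equivalently of Jensen's trace inequality) and whose equality case holds if and only if $\omega = \sigma$. Substituting this bound produces the claimed inequality $\Tr{\omega H} - \Tr{\omega \log \omega} \leq \log Z$, with equality precisely when $\omega = \expB{H}/\Tr{\expB{H}}$. There is no real obstacle here; the only subtle point is ensuring that $\sigma \gg \omega$ so that the relative entropy is finite, which is automatic because $\expB{H}$ has full support.
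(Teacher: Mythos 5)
Your proof is correct. The paper itself states Lemma~\ref{lem:Gibbs} without proof (labelling it ``well known''), so there is no in-paper argument to compare against; your route\,---\,writing $\log\sigma = H - (\log Z)\,\mathbb{1}$ for the Gibbs state $\sigma = \expB{H}/Z$ and reducing the claim, equality case included, to the non-negativity of $D(\omega\|\sigma)$ (Klein's inequality)\,---\,is the canonical derivation and handles the support issue correctly since $\expB{H}$ is full rank.
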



\section{Time complexity} \label{app:time_comp}

The time complexity of Blahut-Arimoto type algorithms is essentially given by the required number of iteration steps times the complexity of one iteration step. In the following, we give a detailed analysis for the computation of the Holevo quantity, the thermodynamic capacity, the coherent information and the mutual information.


\subsection{Holevo quantity} \label{app:time_compl_cq}

The time complexity of Algorithm~\ref{algo:AB_cq_acc} is determined by the required number of iterations and  the time complexity for applying the following update rule
\begin{align} \label{eq:update_rule_cq_complexity}
\lambda^{(t+1)}_i &=  \frac{1}{Z^{(t+1)}}\lambda^{(t)}_i\expB{\frac1\gamma D\left(\tau_{\cE,i} || \cE(\rho_{\lambda^{(t)}})\right)} \\\
&=  \frac{1}{Z^{(t+1)}}\lambda^{(t)}_i\expB{\frac1\gamma \Tr{\tau_{\cE,i} \left\{\log \tau_{\cE,i} - \log\cE(\rho_{\lambda^{(t)}})\right\}}} \, .
\end{align}
Let us sketch how to apply the update rule efficiently to derive an upper bound on the time complexity. We use the following:
\begin{itemize}
\item the channel $\cE_{X \rightarrow B}$ is given as a lookup table, where one can access the elements $\tau_{\cE,i}$ in constant time. The application of $\cE$ to a diagonal density operator $\rho = \sum_{i=1}^{|X|}\rho_{ii} \ket{i}\bra{i}$ can then be calculated as $\cE(\rho)= \sum_{i=1}^{|X|} \rho_{ii} \tau_{\cE,i} $ with time complexity $\mathcal{O}\left(|X||B|^2\right)$,
\item the complexity of calculating the matrix logarithm of $\rho_A$ is $\mathcal{O}(|A|^3)$,
\item the complexity of calculating $\Tr{\rho_A \sigma_A}$ is $\mathcal{O}(|A|^2)$.
\end{itemize}
The algorithm proceeds then with the following calculations:
\begin{enumerate}
\item The terms $c_i =\Tr{\tau_{\cE,i} \log \tau_{\cE,i}}$ can be computed once for all $i \in \{1,2,\dots,|X| \}$ at the start of the algorithm with complexity $\mathcal{O}(|X| |B|^3)$.
\item In each iteration step $t$ we have to compute:
\begin{enumerate}
\item the matrix $H^{(t)}= \log\cE(\rho_{\lambda^{(t)}})$ requiring time $\mathcal{O}(|X||B|^2+|B|^3)$,
\item real numbers $d^{(t)}_i=\Tr{\tau_{\cE,i} \, H^{(t)}}$ for all $i \in \{1,2,\dots,|X| \}$ requiring time $\mathcal{O}(|X||B|^2)$,
\item the unnormalized coefficients $\tilde{\lambda}^{(t+1)}_i = {\lambda}^{(t)}_i\expB{\frac1\gamma (c_i-d_i^{(t)}) }$ requiring time $\mathcal{O}(|X|)$,
\item the normalization coefficient $Z^{(t+1)}= \sum_{i=1}^{|X|} \tilde{\lambda}^{(t+1)}_i$ requiring time $\mathcal{O}(|X|)$.
\end{enumerate}
We conclude that the complexity for one iteration step is $\mathcal{O}(|B|^3+|X||B|^2)$.
\end{enumerate}

The number of required iterations to get an $\varepsilon$-approximation to the capacity is of order $\mathcal{O} ( \log |X|/\varepsilon)$ and hence the required time for all the iteration steps is $\mathcal{O} \left(\left(|B|^3+|B|^2 |X|\right)\log |X|/\varepsilon\right)$. We conclude that the time complexity of Algorithm~\ref{algo:AB_cq_acc} is given by 
\begin{align}
\mathcal{O} \left(\frac{(|B|^3+|B|^2 |X|)\log |X|}{\varepsilon} + |X| |B|^3 \right) \, .
\end{align}


\subsection{Thermodynamic capacity}\label{app:time_compl_thermo}

The time complexity of Algorithm~\ref{algo:AB_thermo_acc} is determined by the required number of iterations and the complexity of applying the following update rule 
\begin{align} \label{eq:update_rule_cq_complexity2}
\rho^{(t+1)} = \frac{1}{Z^{(t+1)}} \expB{\log\rho^{(t)} +\frac1\gamma \HE(\rho^{(t)})} \, ,
\end{align}
where  $\HE(\sigma) = -\log\sigma + \cE^\dagger(\logs \cE(\sigma)) +\log\Gamma_A - \cE^\dagger(\log\Gamma_B)$ and $Z^{(t+1)}=\Tr{\expB{\log\rho^{(t)} +\frac1\gamma \HE(\rho^{(t)})}}$ normalizes the state. To calculate the complexity of applying the update rule, we use that
\begin{itemize}
\item the quantum channel $\cE_{A \rightarrow B}$ and its adjoint $\cE_{B\rightarrow A}^\dagger$ are both given as lookup tables whose $(i,j)$-th entry is given by $\cE(\ket{i}\bra{j}_A)$ or $\cE^\dagger(\ket{i}\bra{j}_B)$, respectively. We assume constant time access to the entries of the table.  Hence, the application of $\cE$ (or $\cE^\dagger$) to a density operator $\rho_A$ has time complexity $\mathcal{O}\left(|A|^2|B|^2\right)$. Indeed, the channel application can be calculated as $\cE(\rho)= \sum_{i,j=1}^{|A|} \rho_{ij} \tau_{\cE,i,j} $ with $\tau_{\cE,i,j}=\cE(\ket{i}\bra{j}_A)$,
\item the complexity of calculating the matrix logarithm and exponential of $\rho_A$ is $\mathcal{O}(|A|^3)$,
\end{itemize}

An iteration step of the form given in~\eqref{eq:update_rule_cq_complexity2} is then found to have complexity $\mathcal{O}(|A|^3 + |A|^2 |B|^2+ |B|^3)$. The number of required iterations to get an $\varepsilon$ approximation to the capacity is of order $\mathcal{O} ( \log |A|/\varepsilon)$ and hence we conclude that the time complexity of Algorithm~\ref{algo:AB_thermo_acc} is given by 
\begin{align}
\mathcal{O} \left(\frac{(|A|^3 + |A|^2 |B|^2+ |B|^3)\log |A|}{\varepsilon}\right) \, .
\end{align}


\subsection{Coherent information}\label{app:time_compl_ci}

The time complexity of Algorithm~\ref{algo:AB_ci_acc} is determined by the required number of iterations and the complexity of applying the following update rule 
\begin{align} \label{eq:update_rule_ci_complexity}
\rho^{(t+1)} = \frac{1}{Z^{(t+1)}} \expB{\log\rho^{(t)} +\frac1\gamma \HE(\rho^{(t)})} \, ,
\end{align}
where  $\HE(\sigma) = \cE_c^\dagger\log\cE_c(\sigma) - \cE^\dagger\log\cE(\sigma)$ and $Z^{(t+1)}=\Tr{\expB{\log\rho^{(t)} +\frac1\gamma \HE(\rho^{(t)}))}}$ normalizes the state. To calculate the complexity of applying the update rule, we use that
\begin{itemize}
\item the application of the quantum channel $\cE_{A \rightarrow B}$ and of its adjoint have time complexity  $\mathcal{O}\left(|A|^2|B|^2\right)$, where the application of the channel $(\cE_c)_{A \rightarrow E}$ and of its adjoint have time complexity $\mathcal{O}\left(|A|^2|E|^2\right)$ (see Appendix~\ref{app:time_compl_thermo} for the argumentation),
\item the complexity of calculating the matrix logarithm and exponential of a density operator $\rho_A$ is $\mathcal{O}(|A|^3)$.
\end{itemize}

An iteration step of the form given in~\eqref{eq:update_rule_ci_complexity} is then found to have complexity $\mathcal{O}(|A|^3 + |B|^3+ |E|^3 +|A|^2 (|B|^2+|E|^2))$. By Stinespring's factorization theorem, the dimension of the environment can be bounded by $|E|\leq K$, where $K\leq |A| |B|$ denotes the Kraus rank of $\cE$, i.e., the minimal number of Kraus operators required to represent $\cE$. Hence, the complexity of one iteration step is  $\mathcal{O}(|A|^3 + |B|^3+ K^3 +|A|^2 (|B|^2+K^2))$.
The number of required iterations to get an $\varepsilon$ approximation to the capacity is of order $\mathcal{O} ( \log |A|/\varepsilon)$ and hence we conclude that the time complexity of Algorithm~\ref{algo:AB_ci_acc} is given by 
\begin{align}
\mathcal{O} \left(\frac{(|A|^3 + |B|^3+ K^3+ |A|^2 (|B|^2+K^2)\log |A|}{\varepsilon}\right) \, .
\end{align}


\subsection{Mutual information}\label{app:time_compl_mi}

The time complexity of Algorithm~\ref{algo:AB_mi_acc} is determined by the required number of iterations and the complexity of applying the following update rule 
\begin{align} \label{eq:update_rule_mi_complexity}
\rho^{(t+1)} = \frac{1}{Z^{(t+1)}} \expB{\log\rho^{(t)} +\frac1\gamma \HE(\rho^{(t)})} \, ,
\end{align}
where  $\HE(\sigma) = -\log(\sigma) + \cE_c^\dagger\log\cE_c(\sigma) - \cE^\dagger\log\cE(\sigma)$ and $Z^{(t+1)}=\Tr{\expB{\log\rho^{(t)} +\frac1\gamma \HE(\rho^{(t)}))}}$ normalizes the state. Note that the complexity of the update rule is the same as in Appendix~\ref{app:time_compl_ci} since the additional term $\log(\sigma)$ in $\HE{(\sigma)}$ does not change the asymptotic complexity. We conclude that the time complexity of Algorithm~\ref{algo:AB_mi_acc} is given by 
\begin{align}
\mathcal{O} \left(\frac{(|A|^3 + |B|^3+ K^3+ |A|^2 (|B|^2+K^2)\log |A|}{\varepsilon}\right) \, .
\end{align}


\bibliographystyle{ultimate2}
\bibliography{new_bib}

\end{document}